\algnewcommand\algorithmicforeach{\textbf{foreach}}
\newcommand\abbr[1]{\textsc{\MakeLowercase{#1}}}
\newcommand\kNN{\ensuremath{k}\abbr{NN}}
\let\dataset\abbr
\newcommand*{\tran}{^{\raisebox{-0.1ex}{$\scriptscriptstyle\mkern-1.5mu\mathsf{T}$}}}
\renewcommand{\S}{\mathbf{S}}
\newcommand{\E}{\mathbf{E}}
\newcommand{\e}{\mathbf{e}}
\renewcommand{\O}{\mathcal{O}}
\newcommand*{\BETA}{_{\raisebox{0ex}{$\scriptscriptstyle\mkern-1.5mu\bm\beta$}}}
\newcommand*{\true}{\ensuremath{\text{\ding{51}}}}
\newcommand*{\bftrue}{\ensuremath{\text{\ding{52}}}}
\newcommand*{\false}{\ensuremath{\text{\ding{55}}}}
\newcommand{\orcid}[1]{\href{https://orcid.org/#1}{\raisebox{-0.3ex}{\protect\includegraphics[height=1em]{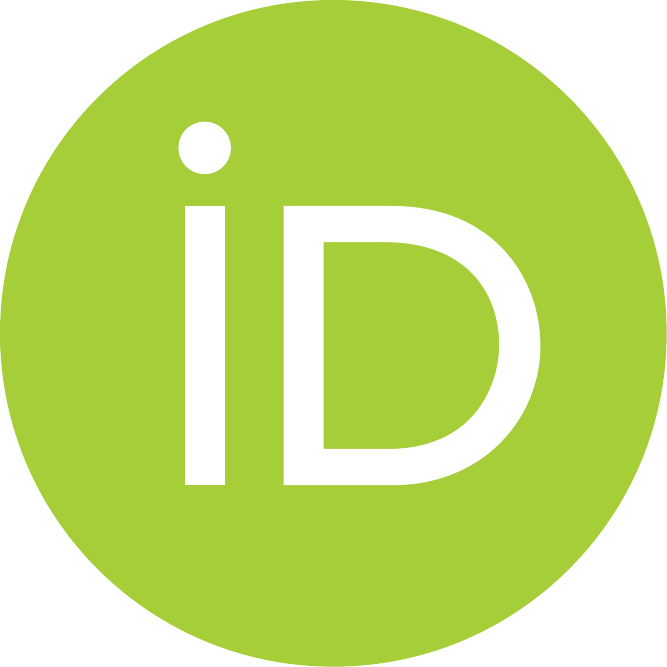}}}}
\newenvironment{funding}{\begingroup\acknowledgements}{\endacknowledgements\endgroup}
\newenvironment{conflict}{\begingroup\acknowledgements}{\endacknowledgements\endgroup}
\begin{document}

\title{Text classification with word embedding regularization and soft similarity measure}
\subtitle{}


\author{
  Vít Novotný~\orcid{0000-0002-3303-4130} \and
  Eniafe Festus Ayetiran~\orcid{0000-0002-6816-2781} \and
  Michal Štefánik~\orcid{0000-0003-1766-5538} \and
  Petr Sojka~\orcid{0000-0002-5768-4007}%
}

\authorrunning{Vít Novotný et al.} 

\institute{
  Vít Novotný \and
  Eniafe F. Ayetiran \and
  Michal Štefánik \and
  Petr Sojka \at
  Faculty of Informatics
  Masaryk University,
  Botanická 68a,
  Brno 602\,00,
  Czechia\\
  Tel.: +420-774\,488\,765\\
  \email{witiko@mail.muni.cz}%
}


\maketitle

\begin{abstract}
Since the seminal work of \citeauthor{mikolov2013efficient}, word embeddings
have become the preferred word representations for many natural language
processing tasks.  Document similarity measures extracted from
word embeddings, such as the soft cosine measure (\abbr{SCM}) and the Word
Mover's Distance (\abbr{WMD}), were reported to achieve state-of-the-art
performance on the semantic text similarity and text classification.

Despite the strong performance of the \abbr{WMD} on text classification
and semantic text similarity, its
super-cubic average time complexity is impractical.
The \abbr{SCM} has quadratic worst-case time complexity, but its
performance on text classification has never been compared with the \abbr{WMD}.
Recently, two word embedding regularization techniques were shown to reduce
storage and memory costs, and to improve training speed, document processing speed, and task performance on
word analogy, word similarity, and
semantic text similarity.  However, the effect of these techniques on text
classification has not yet been studied.

In our work, we investigate the individual and joint effect of the two word
embedding regularization techniques on the document processing speed and the task performance of the \abbr{SCM}
and the \abbr{WMD} on text classification. For evaluation, we use the
$k$\abbr{NN} classifier and six standard datasets:
\dataset{BBCSPORT}, \dataset{TWITTER}, \dataset{OHSUMED},
\dataset{REUTERS-21578}, \dataset{AMAZON}, and \dataset{20NEWS}.
\looseness=-1

We show 39\% average $k$\abbr{NN} test error reduction with regularized word 
embeddings compared to non-regularized word embeddings.
We describe a practical procedure for deriving such regularized
embeddings through Cholesky factorization. 
We also show that the \abbr{SCM} with regularized word embeddings significantly
outperforms the \abbr{WMD} on text classification and is over $10{,}000\times$ faster.

\keywords{%
  Text classification \and
  Soft Cosine Measure \and
  Word Mover's Distance \and
  Word embedding regularization%
}
\end{abstract}

\section{Introduction}
\label{intro}
Word embeddings are the state-of-the-art words representation for many
natural language processing (\abbr{NLP}) tasks. Most of these
tasks are at the word level, such as the word
analogy~\citep{garten2015combining} and word similarity,
and at the sentence level, such as question answering, natural language
inference, semantic role labeling, co-reference resolution, named-entity
recognition, and sentiment analysis~\citep{DBLP:conf/naacl/PetersNIGCLZ18}, semantic text
similarity~\citep{charlet2017simbow}. On document-level tasks, such as
machine translation~\citep{qi2018and}, text
classification~\citep{kusner2015word, Wuetal2018} and ad-hoc information
retrieval~\citep{Zucconetal, kuzi2016query}, word embeddings provide simple
and strong baselines.

Document similarity measures, such as the Soft Cosine Measure
(\abbr{SCM})~\citep{sidorov2014soft, charlet2017simbow, vit} and the Word
Mover's Distance (\abbr{WMD})~\citep{kusner2015word} can be extracted from word
embeddings. The \abbr{SCM} achieves state-of-the-art performance on
the semantic text similarity task~\citep{charlet2017simbow}. The \abbr{WMD}
outperforms standard methods, such as the \abbr{VSM}, \abbr{BM25}, \abbr{LSI},
\abbr{LDA}, m\abbr{SDA}, and \abbr{CCG} on the text classification
task~\citep{kusner2015word}, and achieves
state-of-the-art performance on nine text classification datasets and 22
semantic text similarity datasets with better performance on datasets with shorter
documents. The \abbr{SCM} is asymptotically faster than the \abbr{WMD}, but their
task performance has never been compared.

Regularization techniques were reported to improve the task performance of word
embeddings. We use
the quantization technique of \citet{lam}, which reduces storage and memory cost,
and improves training speed~\citep{courbariaux2016binarized} and task performance on
word analogy and word similarity. We also use the orthogonalization technique
of \citet{vit}, which improves the document processing speed and the task performance of the \abbr{SCM} on semantic text
similarity~\citep{charlet2017simbow}. However, the effect of these techniques
at the document-level (e.g.\ text classification) has not been studied.

In our work, we investigate the effect of word embedding regularization on text
classification. The contributions of our work are as follows: (1)~We show that
word embedding regularization techniques that reduce storage and memory costs
and improve speed can also significantly improve performance on the text
classification task. (2)~We show that the \abbr{SCM} with regularized word
embeddings significantly outperforms the slower \abbr{WMD} on the text
classification task. (3)~We define orthogonalized word embeddings and we prove
a relationship between orthogonalized word embeddings, Cholesky factorization,
and the word embedding regularization technique of \citet{vit}.

The rest of the paper is organized as follows: We present related work in
Section~\ref{sec:related-work}. In Section~\ref{sec:vector-space-models}, we
discuss the document distance and similarity measures. In
Section~\ref{sec:regularization}, we discuss word embedding regularization
techniques and we prove their properties.
Section~\ref{sec:experiment} presents our experiment and
Section~\ref{sec:results} discusses our results.
Section~\ref{sec:conclusion} concludes the paper.

\section{Related Work}
\label{sec:related-work}
Word embeddings represent words in a vector space, where syntactically and
semantically similar words are close to each other. Word embeddings can be
extracted from word co-occurrence matrices~\citep{deerwester1990indexing,
pennington2014glove} and from neural network language models~\citep{bengio2003neural,
mikolov2013efficient, DBLP:conf/naacl/PetersNIGCLZ18}.
Word embeddings extracted from neural network language models have been shown to be effective on
several (\abbr{NLP}) tasks, but they tend to suffer from overfitting due to high
feature dimensionality. There have been several works that use word embedding
regularization to reduce overfitting.

\citet{IgorandHod} introduce a technique which re-embeds an existing embedding
with the end product being a target embedding. In their technique, they perform
optimization of a convex objective. Their objective is a linear combination of
the log-likelihood of the dataset under a designated target embedding and
the Frobenius norm of a distortion matrix. The Frobenius norm serves as a
regularizer that penalizes the Euclidean distance between the initial and the
designated target embeddings. To enrich the embedding, they further use
external source embedding, which they incorporated into the regularizer, on the
supervised objective. Their approach was reported to improve performance on the
sentiment classification task.

The Dropout technique was introduced by \citet{Nitishetal} to
mitigate the problem of overfitting in neural networks by dropping their
neurons and links between neurons during
training. During the training, Dropout samples an exponential number
of the reduced networks and at test time, it approximates the effect of averaging
the previsions of these thinned networks by using a single unthinned
network with smaller weights. Learning the Dropout networks involves two major
steps: back propagation and unsupervised pre-training. Dropout was successfully applied to
a number of tasks in the areas of vision, speech recognition, document
classification, and computational biology.

\citet{sunetal} introduce a sparse
constraint into Word2Vec~\citep{mikolov2013efficient} in order to increase its
interpretability and performance. They added the $\ell_1$ regularizer into the
loss function of the Continuous Bag-of-Words (\abbr{CBOW}).
They applied the technique to online learning and to solve the
problem of stochastic gradient descent, they employ an online optimization
algorithm for regularized stochastic learning -- the Regularized Dual Averaging
(\abbr{RDA}).
\looseness=-1

In their own work, \citet{pengetal} experimented with four
regularization techniques: penalizing weights (embeddings
excluded), penalizing embeddings, word re-embedding and Dropout. At the end of
their experiments, they concluded the following: (1)~regularization techniques do
help generalization, but their effect depends largely on the dataset size.
(2)~penalizing the $\ell_2$-norm of embeddings also improves task performance
(3)~incremental hyperparameter tuning achieves
similar performance, which shows that regularization has mostly a local
effect (4)~Dropout performs slightly worse than the $\ell_2$-norm penalization
(5)~word re-embedding does not improve task performance.
\looseness=-1

Another approach by
\citet{songetal} uses pre-learned or external priors as a regularizer for the
enhancement of word embeddings extracted from neural network language models.
They considered two types of
embeddings. The first one was extracted from topic distributions generated from
unannotated data using the Latent Dirichlet Allocation (\abbr{LDA}). The second was
based on dictionaries that were created from human annotations. A novel data
structure called the trajectory softmax was introduced for effective learning
with the regularizers. \citeauthor{songetal} reported improved embedding quality through
learning from prior knowledge with the regularizer.

A different algorithm was
presented by \citet{yangetal}. They applied their own regularization to cross-domain
embeddings. In contrast to \citet{sunetal}, who applied their regularization to
the \abbr{CBOW}, the technique of \citeauthor{yangetal} is a regularized skip-gram
model, which allows word embeddings to be learned from different domains. They
reported the effectiveness of their approach with experiments on entity
recognition, sentiment classification and targeted sentiment analysis.

\citet{berend} in his own approach investigates the effect of $\ell_1$-regularized
sparse word embeddings for identification of multi-word expressions.
\citeauthor{berend} uses dictionary learning, which decomposes the original
embedding matrix by solving an optimization problem.

Other works that focus solely on reducing storage and memory costs of word
embeddings include the following:
\citet{hinton2014distilling} use a distillation
compression technique to compress the knowledge in an ensemble into a single
model, which is much easier to deploy.
\citet{chen2015compressing} present
HashNets, a novel framework to reduce redundancy in neural networks.
Their neural architecture
uses a low-cost hash function to arbitrarily group link weights into hash
buckets, and all links within the same hash bucket share a single parameter
value.
\citet{see2016compression} employ a network weight pruning strategy and apply it to
Neural Machine Translation (\abbr{NMT}). They experimented with three
\abbr{NMT} models, namely the class-blind, the class-uniform and the class-distribution model.
The result of their experiments shows that even strong weight pruning does not
reduce task performance.
FastTest.zip is a compression technique by
\citet{joulin2016fasttext1} who use product quantization to mitigate the loss
in task performance reported with earlier quantization techniques.
In an attempt to reduce the space and memory costs of word embeddings,
\citet{shu2017compressing}
experimented with construction of embeddings with a few basis vectors, so
that the composition of the basis vectors is determined by a hash code.

Our technique is based partly on the recent regularization technique by
\citet{lam}, in which a quantization function was introduced into the loss function
of the \abbr{CBOW} with negative sampling to show
that high-quality word embeddings using 1--2 bits per parameter can be learned.
\citeauthor{lam}'s technique is based on the work of \citet{courbariaux2016binarized}, who employ
neural networks with binary weights and activations to reduce
space and memory costs. A major component of their work is the use of bit-wise
arithmetic operations during computation.

\section{Document Distance and Similarity Measures}
\label{sec:vector-space-models}
The Vector Space Model (\abbr{VSM})~\citep{ml:SaltonBuckley1988} is a
distributional semantics model that is fundamental to a number of text
similarity applications including text classification. The \abbr{VSM}
represents documents as coordinate vectors relative to a real
inner-product-space orthonormal basis $\bm\beta$, where coordinates correspond
to weighted and normalized word frequencies. In the \abbr{VSM}, a commonly used
measure of similarity for document vectors $\mathbf x$ and $\mathbf y$ is the
cosine similarity:
\begin{equation}
  \text{cosine similarity of }\mathbf x\text{ and }\mathbf y =
  \langle\mathbf x/\Vert\mathbf x\Vert_2, \mathbf y/\Vert\mathbf y\Vert_2\rangle
  \text{, where }\langle\mathbf x, \mathbf y\rangle = \big((\mathbf
  x)\BETA\big)\tran(\mathbf y)\BETA\text{ and }\Vert\mathbf z\Vert_2\text{ is
  the }\ell_2\text{-norm of }\mathbf z.
\end{equation}

The cosine similarity is highly susceptible to polysemy, since distinct words
correspond to mutually orthogonal basis vectors. Therefore, documents that use
different terminology will always be regarded as dissimilar. To borrow an
example from \citet{kusner2015word}, the cosine similarity of the documents
``Obama speaks to the media in Illinois'' and ``the President greets the press
in Chicago'' is zero if we disregard stop words.

The Word Mover's Distance (\abbr{WMD}) and the Soft Cosine Measure (\abbr{SCM})
are document distance and similarity measures that address polysemy.  Because
of the scope of this work, we discuss briefly the \abbr{WMD} and the \abbr{SCM}
in the following subsections.

\subsection{Word Mover's Distance}
The Word Mover's Distance (\abbr{WMD})~\citep{kusner2015word} uses network
flows to find the optimal transport between \abbr{VSM} document vectors. The
distance of two document vectors $\mathbf x$ and $\mathbf y$ is the following:
\begin{equation}
  \text{\abbr{WMD}}(\mathbf x, \mathbf y) = \text{minimum cumulative cost }
  \sum_{i,j}f_{ij}c_{ij}\text{ of a flow }\mathbf F=(f_{ij})\text{ subject to }
  \mathbf F\geq 0, \sum_j f_{ij}=(x_i)\BETA,
\end{equation}
where the cost $c_{ij}$ is the Euclidean distance of embeddings for words $i$
and $j$. We use the implementation in PyEMD~\citep{pele2008, pele2009} with
the best known average time complexity $\O(p_{\mathbf{xy}}^3\log
p_{\mathbf{xy}})$, where $p_{\mathbf{xy}}$ is the number of unique words in
$\mathbf x$ and~$\mathbf y$.

\subsection{Soft Cosine Measure}
\label{sec:scm}
The soft \abbr{VSM}~\citep{sidorov2014soft, vit} assumes that document vectors
are represented in a non-orthogonal normalized basis $\bm\beta$. In the soft
\abbr{VSM}, basis vectors of similar words are close and the cosine similarity
of two document vectors $\mathbf x$ and $\mathbf y$ is the Soft Cosine
Measure~(\abbr{SCM}):
\begin{equation}
  \label{eq:scm}
  \text{\abbr{SCM}}(\mathbf x, \mathbf y) = \langle\mathbf x/\Vert\mathbf
  x\Vert_2, \mathbf y/\Vert\mathbf y\Vert_2\rangle
  \text{, where }\langle\mathbf x, \mathbf y\rangle = \big((\mathbf
  x)\BETA\big)\tran\S(\mathbf y)\BETA\text{, and }\S\text{ is a word similarity matrix.}
\end{equation}
We define the word similarity matrix $\S$ like
\citet{charlet2017simbow}: $s_{ij} = \max(t, \langle\e_i/\Vert\e_i\Vert_2,
\e_j/\Vert\e_j\Vert_2\rangle)^o\!\!$, where $\e_i$ and $\e_j$ are the
embeddings for words $i$ and $j$, and $o$ and $t$ are free parameters.
We use the implementation in the \texttt{similarities.termsim} module of
Gensim~\citep{rehurek10framework}.
The worst-case time complexity of the \abbr{SCM} is $\O(p_{\mathbf x}p_{\mathbf
y})$, where $p_{\mathbf x}$ is the number of unique words in $\mathbf x$ and
$p_{\mathbf y}$ is the number of unique words in $\mathbf y$.

\section{Word Embedding Regularization}
\label{sec:regularization}
The Continuous Bag-of-Words Model (\abbr{CBOW})~\citep{mikolov2013efficient}
is a neural network language model that predicts the center word from context
words. The \abbr{CBOW} with negative sampling minimizes the following loss
function:
\looseness=-1
\begin{equation}
  J(\mathbf u_o, \hat{\mathbf v}_c) = -\log\big(\sigma(\langle\mathbf u_o, \hat{\mathbf v}_c\rangle)\big)
  - \sum_{i = 1}^k \log\big(\sigma(-\langle\mathbf u_i, \hat{\mathbf v}_c\rangle)\big),
  \text{where }\hat{\mathbf v}_c = \frac{1}{2w}\sum_{-w+i\leq i\leq w+o, i\neq o} \mathbf v_i,
\end{equation}
$\mathbf u_o$ is the vector of a center word with corpus position $o$,
$\mathbf v_i$ is the vector of a context word with corpus position $i$,
and the window size $w$ and the number of negative samples $k$ are free parameters.

Word embeddings are the sum of center word vectors and context word vectors.
To improve the properties of word embeddings and the task performance of the
\abbr{WMD} and the \abbr{SCM}, we apply two regularization techniques to
\abbr{CBOW}.
\looseness=-1

\subsection{Quantization}
Following the approach of \citet{lam}, we quantize the center word vector $\mathbf u_o$ and the context word
vector $\mathbf v_i$ during the forward and backward propagation stages of the training:
\begin{equation}
  \text{a quantized center word vector }\mathbf u_o = \nicefrac{1}{3}\cdot\text{sign}(\mathbf u_o)
  \text{ and a quantized context word vector }\mathbf v_i = \nicefrac{1}{3}\cdot\text{sign}(\mathbf v_i).
\end{equation}
Since the quantization function is non-differentiable at certain points, we
use \citeauthor{hinton2012neural}'s straight-through
estimator~\citep[Lecture~15b]{hinton2012neural} as the gradient:
\begin{equation}
  \nabla(\nicefrac{1}{3}\cdot\text{sign}) = \nabla I,
  \text{where }\nabla\text{ is the gradient operator and }I\text{ is the identity function.}
\end{equation}

Lam shows that quantization reduces the storage and memory cost and improves
the performance of word embeddings on the word analogy and word similarity
tasks.

\begin{figure*}
\hspace{-0.4cm}%
\includegraphics{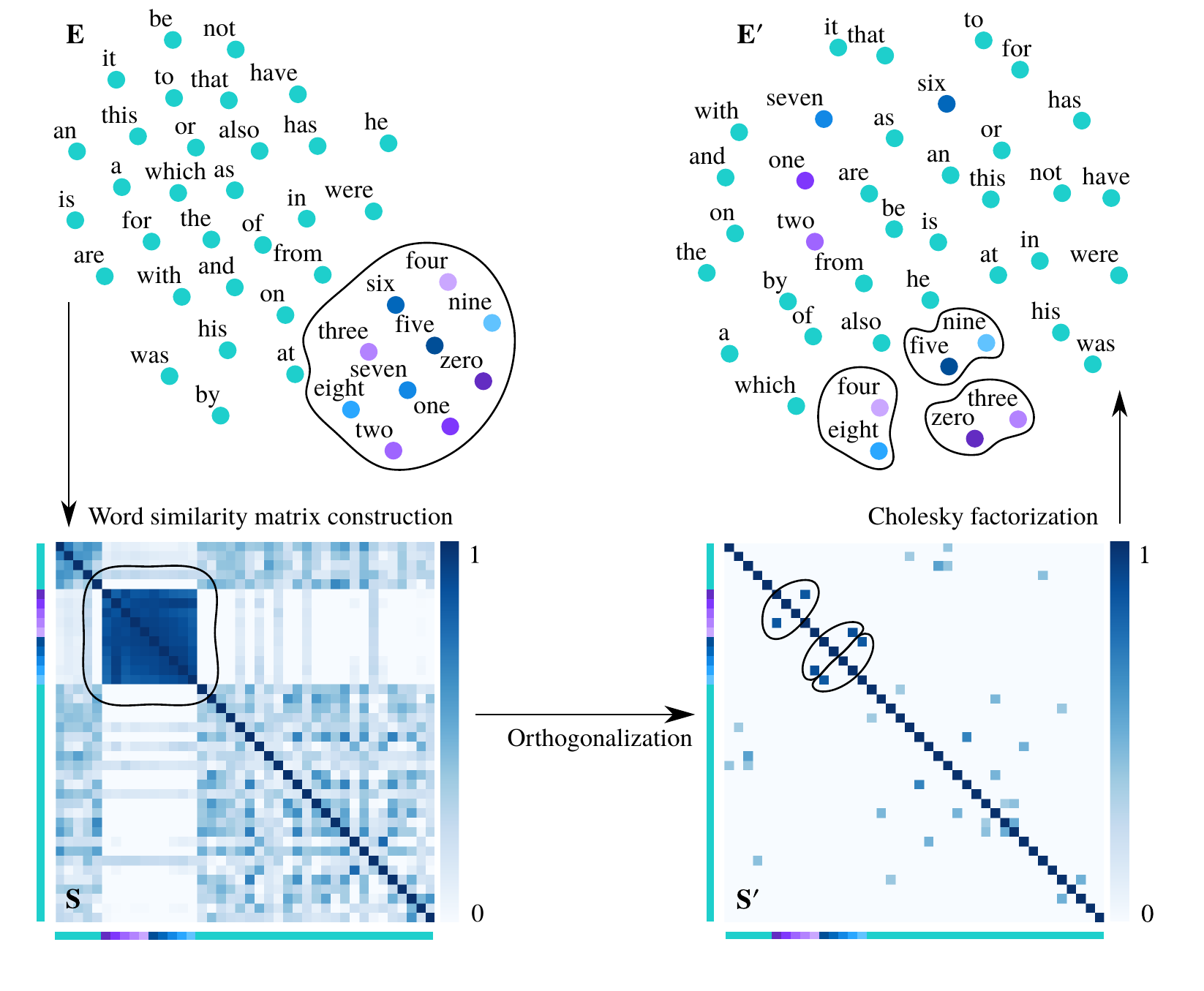}%
\caption{%
  T-\abbr{SNE} visualizations of non-regularized word embeddings $\E$ (top left)
  and orthogonalized word embeddings $\E'$
  (top right) for the 40~most common words on the first 100~MiB of
  the English Wikipedia, and images of the word similarity matrices $\S$ (bottom left) and
  $\S'$ (bottom right).
  Word similarity matrix construction uses the parameters $o = 1$ and $t = -1$, and orthogonalization
  uses the parameter values $\text{Sym} = \text{\true}, \text{Dom} = \text{\true},$ and
  $\text{Idf} = \text{\true}$.
  Notice how the cluster of numerals from $\E$ is separated in $\E'$ due to the parameter
  value $\text{Idf} = \text{\true}$, which makes common words more likely to be mutually orthogonal.
}
\label{fig:orthogonalization}
\end{figure*}

\subsection{Orthogonalization}
\label{sec:orthogonalization}
\citet{vit} shows that producing a sparse word similarity matrix $\S'$ that
stores at most $C$ largest values from every column of $\S$ reduces the
worst-case time complexity of the \abbr{SCM} to $\O(p_{\mathbf x})$, where
$p_{\mathbf x}$ is the number of unique words in a document vector $\mathbf x$.

\citeauthor{vit} also claims that $\S'$ improves the performance of the soft
\abbr{VSM} on the question answering task and describes a greedy algorithm for
producing $\S'$, which we will refer to as the orthogonalization algorithm.
The orthogonalization algorithm has three boolean parameters: Sym, Dom, and
Idf. Sym and Dom make $\S'$ symmetric and strictly diagonally dominant. Idf
processes columns of $\S$ in descending order of inverse document
frequency~\citep{robertson2004understanding}:
\begin{equation}
  \text{inverse document frequency of word }w = -\log_2\text P(w\mid D)
  = \log_2\frac{|D|}{|\{d\in D\mid w\in d\}|}
  \text{, where }D\text{ are documents.}
\end{equation}

In our experiment, we compute the \abbr{SCM} directly from the word similarity
matrix $\S'$, see Equation~\eqref{eq:scm}.  However, actual word embeddings
must be extracted for many \abbr{NLP} tasks. \citeauthor{vit} shows that the
word similarity matrix $\S'$ can be decomposed using Cholesky factorization. We
will now define orthogonalized word embeddings and we will show that the
Cholesky factors of $\S'$ are in fact orthogonalized word embeddings.

\begin{definition}[Orthogonalized word embeddings]
Let $\E, \E'$ be real matrices with $|V|$ rows, where $V$ is a vocabulary of
words. Then $\E'$ are orthogonalized word embeddings from $\E$, which we denote
$\E'\leq_\bot\E$, iff for all $i,j=1,2,\ldots,|V|$ it holds that $\langle\e'_i, \e'_j\rangle
\neq0\implies\langle\e'_i, \e'_j\rangle=\langle\e_i, \e_j\rangle$, where $\e_k$
and $\e'_k$ denote the $k$-th rows of $\E$ and~$\E'$.
\end{definition}

\begin{theorem}
\label{thm:orthogonalization}
Let $\E$ be a real matrix with $|V|$ rows, where $V$ is a vocabulary of words,
and for all $k=1,2,\ldots,|V|$ it holds that $\Vert\e_k\Vert_2=1$.
Let $\S$ be a word similarity matrix constructed from $\E$ with the parameter
values $t=-1$ and $o=1$ as described in Section~\ref{sec:scm}. Let $\S'$ be a
word similarity matrix produced from $\S$ using the orthogonalization algorithm
with the parameter values $\text{Sym}=\text{\true}$ and
$\text{Dom}=\text{\true}$.  Let $\E'$ be the Cholesky factor of $\S'$. Then
$\E'\leq_\bot\E$.
\end{theorem}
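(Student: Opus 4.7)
The proof has three moving parts: building $\S$ from $\E$, sparsifying $\S$ into $\S'$, and Cholesky-factoring $\S'$ into $\E'$. The plan is to show that the first and third steps preserve inner products exactly and that the middle step only ever replaces entries by zero---never by a different nonzero value. First, the hypotheses $\Vert\e_k\Vert_2=1$, $t=-1$, and $o=1$ collapse the formula of Section~\ref{sec:scm} to $s_{ij}=\max(-1,\langle\e_i,\e_j\rangle)^1=\langle\e_i,\e_j\rangle$, so the original pairwise inner products live verbatim in $\S$.

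For the third step I need to know the Cholesky factor exists, i.e.\ that $\S'$ is positive definite. The $\text{Sym}=\true$ flag makes $\S'$ symmetric and $\text{Dom}=\true$ makes it strictly diagonally dominant; since the diagonal entries of $\S$ equal $\langle\e_i,\e_i\rangle=1$ and the algorithm has no reason to zero the diagonal, the diagonal of $\S'$ is all ones as well. A symmetric strictly diagonally dominant matrix with positive diagonal is positive definite by Gershgorin's circle theorem, so $\S'=\E'(\E')^\tran$ is well defined, giving the crucial identity $\langle\e'_i,\e'_j\rangle=s'_{ij}$.

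The real work is the middle step: I must verify that $s'_{ij}\in\{0,s_{ij}\}$ for every $i,j$. \citet{vit}'s orthogonalization algorithm retains at most $C$ entries per column of $\S$ and zeros the rest; the Sym step copies retained entries into their transpose position (where they agree with the original by symmetry of $\S$), and the Dom step only further zeros off-diagonal entries to enforce diagonal dominance. None of these operations fabricate new numerical values, so $s'_{ij}$ is either $0$ or $s_{ij}$. Combining the three stages, $\langle\e'_i,\e'_j\rangle\neq 0$ forces $s'_{ij}\neq 0$, hence $s'_{ij}=s_{ij}=\langle\e_i,\e_j\rangle$, which is exactly the defining property of $\E'\leq_\bot\E$. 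I expect the main obstacle to be precisely this bookkeeping: one has to argue from the brief description of the algorithm that neither the selection step nor the Sym/Dom corrections ever \emph{modify} a retained entry---they only keep it or zero it. Once that invariant is pinned down, the rest of the theorem is a short chain of equalities and positive definiteness of $\S'$ is automatic from strict diagonal dominance with unit diagonal.
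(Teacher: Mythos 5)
Your proposal is correct and follows essentially the same route as the paper's proof: collapse the similarity formula to $s_{ij}=\langle\e_i,\e_j\rangle$ using $t=-1$, $o=1$, and unit norms; use Sym and Dom to get symmetry and strict diagonal dominance, hence positive definiteness and a unique Cholesky factor with $\langle\e'_i,\e'_j\rangle=s'_{ij}$; and observe that every retained entry of $\S'$ equals the corresponding entry of $\S$. Your version is in fact slightly more careful than the paper's on two points---noting that positive definiteness requires the (unit) diagonal, and spelling out why the sparsification never alters a retained value---but these are refinements of the same argument, not a different one.
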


\begin{proof}
With the parameter values $\text{Sym}=\text{\true}, \text{Dom}=\text{\true}$,
$\S'$ is symmetric and strictly diagonally dominant, and therefore also
positive definite. The symmetric positive definite matrix~$\S'$ has a unique
Cholesky factorization of the form $\S' = \E'(\E')\tran$. Therefore, the
Cholesky factor $\E'$ exists and is uniquely determined.

From $\S' = \E'(\E')\tran$, we have that for all $i,j=1,2,\ldots,|V|$ such that
the sparse matrix $\S'$ does not contain the value $s'_{ij}$ it holds that
$s'_{ij} = \langle\e'_i, \e'_j\rangle = 0$. Since the implication in the
theorem only applies when $\langle\e'_i, \e'_j\rangle\neq0$, we do not need to
consider this case.

From $\S' = \E'(\E')\tran, o=1, t=-1,$ and $\Vert\e_k\Vert_2=1$, we have that
for all $i,j=1,2,\ldots,|V|$ such that the sparse matrix $\S'$ contains the value
$s'_{ij}$, it holds that $\langle\e'_i, \e'_j\rangle = s'_{ij} = s_{ij} =
\max(t, \langle\e_i/\Vert\e_i\Vert_2,
\e_j/\Vert\e_j\Vert_2\rangle)^o=\langle\e_i,\e_j\rangle.$\qed
\end{proof}

Figure~\ref{fig:orthogonalization} shows the extraction of orthogonalized word
embeddings $\E'$ from $\E$: From $\E$, we construct the dense word similarity
matrix $\S$ and from $\S$, we produce the sparse word similarity matrix $\S'$
through orthogonalization. From $\S'$, we produce the orthogonalized
embeddings $\E'$ through Cholesky factorization.

With a large vocabulary~$V$, a $|V|\times|V|$ dense matrix~$\S$ may not fit
in the main memory, and we produce the sparse matrix~$\S'$ directly
from $\E$. Similarly, a $|V|\times|V|$ dense matrix~$\E'$ may also not fit
in the main memory, and we use sparse Cholesky factorization to produce a
sparse matrix~$\E'$ instead. If dense word embeddings are required, we
use dimensionality reduction on $\E'$ to produce a $|V|\times D$ dense matrix,
where $D$ is the number of dimensions.

With the parameter value $\text{Idf}=\text{\true},$ words with small inverse
document frequency, i.e.\ common words such as numerals, prepositions, and
articles, are more likely to be mutually orthogonal (i.e. $\langle\e'_i,
\e'_j\rangle=0$) than rare words. This is why in
Figure~\ref{fig:orthogonalization}, the numerals form a cluster in the
non-regularized word embeddings $\E$, but they are separated in orthogonalized word
embeddings $\E'$.

\section{Experiment}
\label{sec:experiment}
The experiment was conducted using six~standard text classification datasets by employing both the
\abbr{WMD} and the \abbr{SCM} with a $k$ Nearest Neighbor (\kNN{}) classifier using both regularized
and non-regularized word embeddings. First, we describe briefly our datasets, then
our experimental steps. Our experimental code is available online.%
\footnote{See \url{https://github.com/MIR-MU/regularized-embeddings}.}

\subsection{Datasets}
In our experiment, we used the following six~standard text classification datasets:
\paragraph{BBCSPORT} 
The \dataset{BBCSPORT} dataset~\citep{greene06icml}
consists of 737~sport news articles from the \abbr{BBC} sport website
in five topical areas: athletics, cricket, football, rugby, and
tennis. The period of coverage was during 2004--2005.

\paragraph{TWITTER}
The \dataset{TWITTER} dataset~\citep{sanders2011sanders} consists of 5,513
tweets hand-classified into one of four topics: Apple, Google,
Twitter, and Microsoft. The sentiment of every tweet was also hand-classified as
either Positive, Neutral, Negative, or Irrelevant.

\paragraph{OHSUMED}
The \dataset{OHSUMED} dataset~\citep{Hersh} is a set of 348,566 references spanning 1987--1991
from the \abbr{MEDLINE} bibliographic database of important, peer-reviewed medical
literature maintained by the National Library of Medicine (\abbr{NLM}). While the
majority of references are to journal articles, there are also a small number
of references to letters of the editor, conference proceedings, and other
reports. Each reference contains human-assigned subject headings from the
17,000-term Medical Subject Headings (MeSH) vocabulary.
\looseness=-1

\paragraph{REUTERS}
The documents in the \dataset{REUTERS}-21578 collection~\citep{lewis1997reuters}
appeared on the Reuters newswire in 1987.  The documents were assembled and
indexed with categories by the personnel of Reuters Ltd. The collection is
contained in 22~\abbr{SGML} files. Each of the first 21~files (\texttt{reut2-000.sgm}
through \texttt{reut2-020.sgm}) contains 1,000 documents, while the last one (\texttt{reut2-021.sgm})
contains only 578 documents.

\paragraph{AMAZON}
The \dataset{AMAZON} dataset~\citep{mcauley2015image} contains 142.8
million product reviews from Amazon spanning 1996--2014. Each product belongs to one of 24~categories,
which include Books, Cell Phones \& Accessories, Clothing, Shoes \& Jewelry,
Digital Music, and Electronics, among others. The $5$-core subset of the
\abbr{AMAZON} dataset contains only those products and users with at least five
reviews.

\paragraph{20NEWS}
The \abbr{20NEWS} dataset~\citep{Lang95} is a collection of 18,828 Usenet newsgroup
messages partitioned across 20 newsgroups with different topics. The collection
has become popular for experiments in text classification and text clustering.

\subsection{Preprocessing}
For \dataset{TWITTER}, we use 5,116 out of 5,513 tweets due to unavailability,
we restrict our experiment to the Positive, Neutral, and Negative classes, and
we subsample the dataset to 3,108 tweets like \citet{kusner2015word}.
For \dataset{OHSUMED}, we use the 34,389 abstracts related to cardiovascular
diseases~\citep{joachims1998text} out of 50,216 and we restrict our experiment
to abstracts with a single class label from the first 10~classes out of~23. In
the case of \dataset{REUTERS}, we use the R8 subset~\citep{cachopo2007improving}.
For \dataset{AMAZON}, we use 5-core reviews from the Books, CDs and Vinyl,
Electronics, and Home and Kitchen classes.

We preprocess the datasets by lower-casing the text and by tokenizing to
longest non-empty sequences of alphanumeric characters that contain at least
one alphabetical character. We do not remove stop words or rare words, only
words without embeddings. We split each dataset into train and test subsets
using either a standard split (for \dataset{REUTERS} and \dataset{20NEWS}) or
following the split size of \citet{kusner2015word}. See
Table~\ref{table:dataset-statistic} for statistics of the preprocessed datasets.

\begin{figure}
\vspace{-0.8cm}
\begin{minipage}[b]{0.48\textwidth}
  \begin{algorithm}[H]
    \caption{%
      \footnotesize\enskip
      The \kNN{} classifier for the \abbr{WMD} and the \abbr{SCM}
      with regularized and non-regularized word embeddings%
    }%
    \label{alg:knn}
    \begingroup
      \baselineskip=10pt
      \hspace*{\algorithmicindent} \textbf{Input}: Training data~$A$, test data~$B$, neighborhood size~$k$ \\
      \hspace*{\algorithmicindent} \textbf{Output}: Class labels for the test data,~$L$
      \begin{algorithmic}[1]
        \ForEach {$b_i\in B$}
          \State For all $a_j\in A$, compute the \abbr{SCM} / \abbr{WMD} between $a_j$ and~$b_i$.
          \State Choose the $k$ nearest neighbors of $b_i$ in $A$.
          \State Assign the majority class of the $k$ nearest neighbors to~$l_i$.
      \EndFor
      \end{algorithmic}
    \endgroup
  \end{algorithm}
\end{minipage}\hfill
\begin{minipage}[b]{0.48\textwidth}
  \begin{table}[H]
    \caption{%
      Statistics of the datasets used for evaluation, where $E[p]$ is the
      average number of unique words in a document, and $|\mathcal Y|$ is the
      number of document classes%
    }%
    \label{table:dataset-statistic}
    \vspace{-0.2cm}
    \begin{center}
    \begin{tabular}{crrrr}
    \toprule
      \begin{tabular}{@{}c@{}}Dataset \\ name\end{tabular} &
        \begin{tabular}{@{}c@{}}Train set \\ size\end{tabular} &
        \begin{tabular}{@{}c@{}}Test set \\ size\end{tabular} &
        \begin{tabular}{@{}c@{}}$E[p]$\end{tabular} &
        \begin{tabular}{@{}c@{}}$|\mathcal Y|$\end{tabular}
        \\ \midrule
      \dataset{BBCSPORT} &    517 &   220 & 181.0 &  5 \\
      \dataset{TWITTER}  &  2,176 &   932 &  13.7 &  3 \\
      \dataset{OHSUMED}  &  3,999 & 5,153 &  89.4 & 10 \\
      \dataset{REUTERS}  &  5,485 & 2,189 &  56.0 &  8 \\
      \dataset{AMAZON}   &  5,600 & 2,400 &  86.3 &  4 \\
      \dataset{20NEWS}   & 11,293 & 7,528 & 145.0 & 20 \\
      \bottomrule
    \end{tabular}
    \end{center}
    \vspace{-0.29cm}
  \end{table}
\end{minipage}
\end{figure}

\begin{figure*}[p!]
\centering
\mbox{%
  \hspace{-0.2cm}%
  \includegraphics{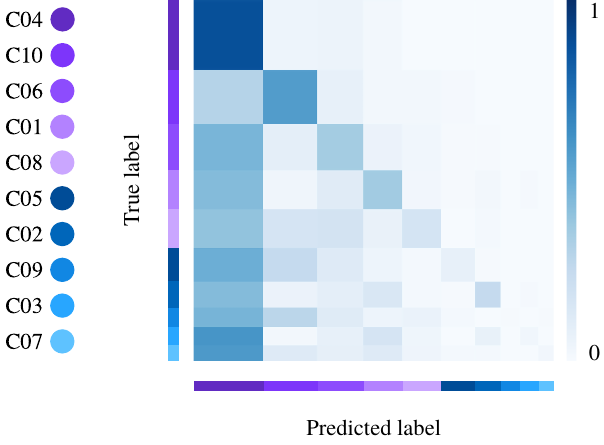}\hspace{0.3cm}%
  \raisebox{0cm}{\includegraphics[width=5cm,height=5cm]{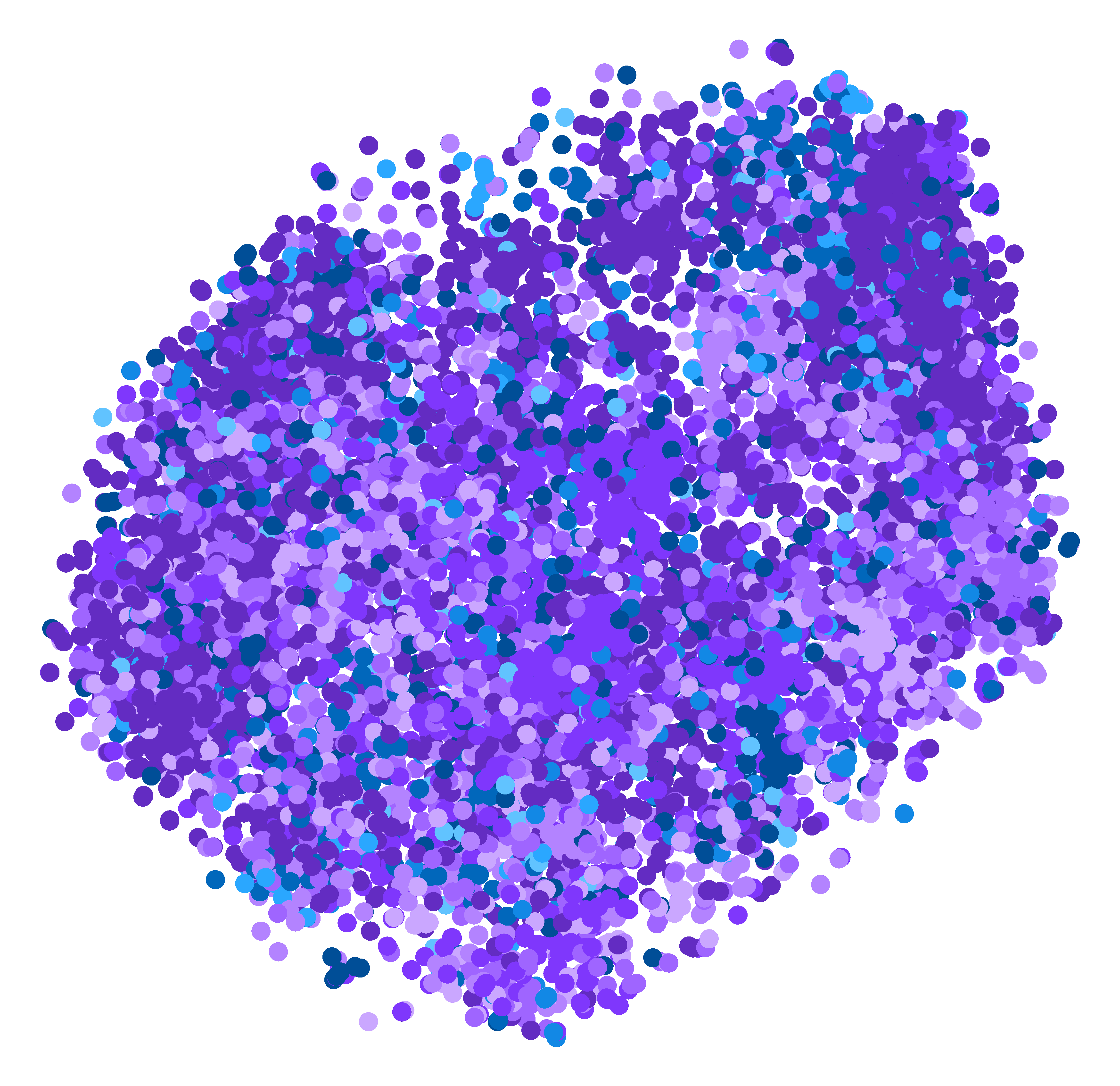}}%
}%
\\[0.5cm]
\mbox{%
  \hspace{-0.2cm}%
  \includegraphics{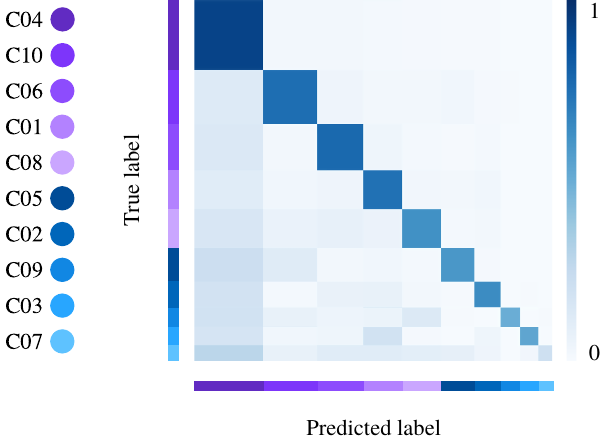}\hspace{0.3cm}%
  \raisebox{0cm}{\includegraphics[width=5cm,height=5cm]{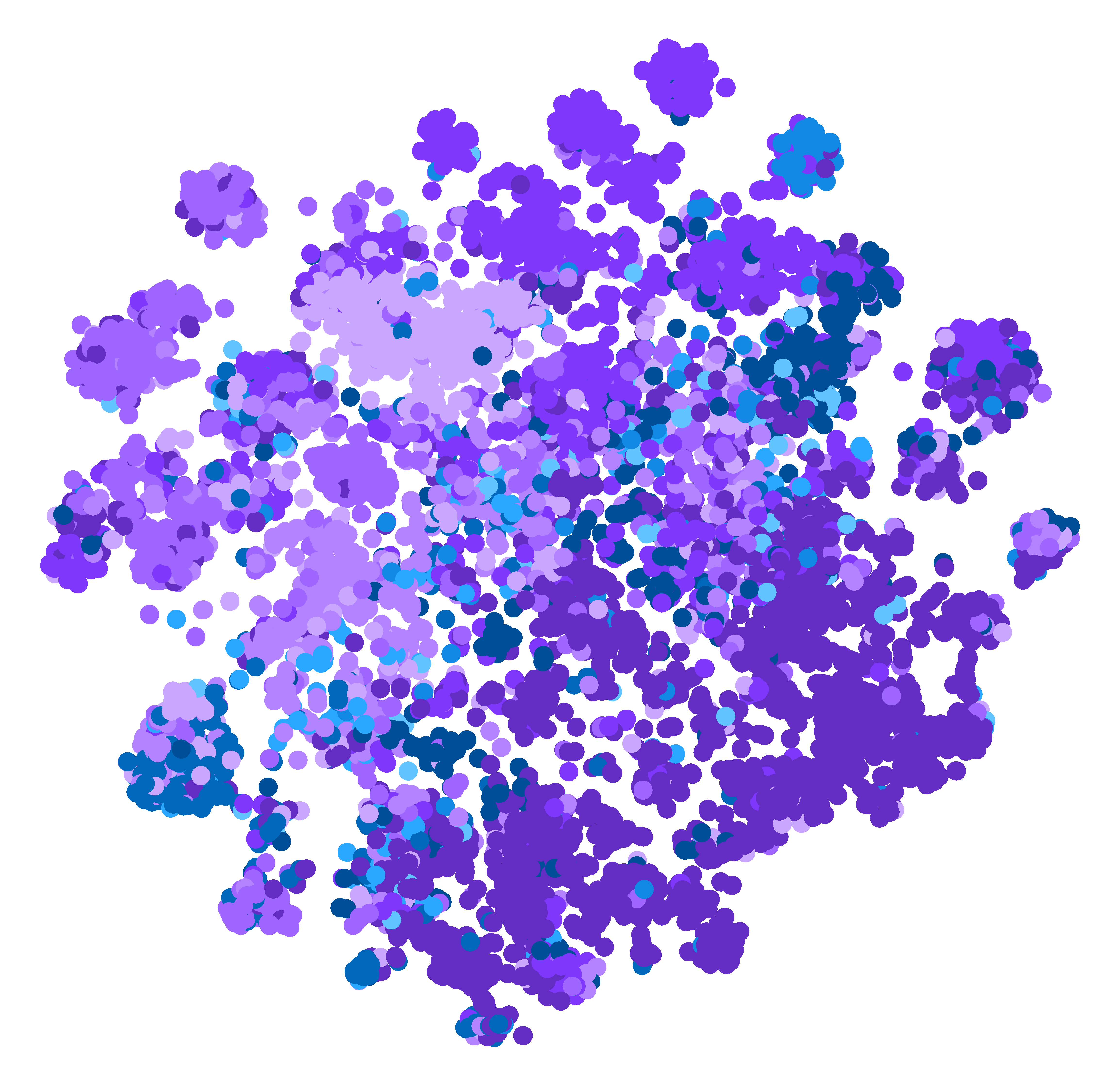}}%
}%
\caption{%
  \kNN{} ($k=9$) confusion matrices and t-\abbr{SNE} document visualizations
  for the soft \abbr{VSM} with non-regularized (50.71\% test error, top) and
  regularized (24.14\% test error, bottom) word embeddings on the
  \dataset{OHSUMED} dataset.
  See \url{https://mir.fi.muni.cz/ohsumed-nonregularized}
  and \url{https://mir.fi.muni.cz/ohsumed-regularized} for interactive three-dimensional
  t-\abbr{SNE} document visualisations.
}
\label{fig:classification-ohsumed}
\end{figure*}

\begin{figure*}[p!]
\centering
\mbox{%
  \hspace{-0.2cm}%
  \includegraphics{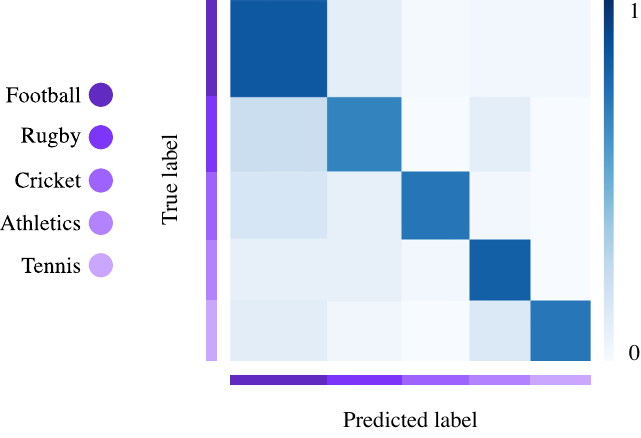}\hspace{0.3cm}%
  \raisebox{0cm}{\includegraphics[width=5cm,height=5cm]{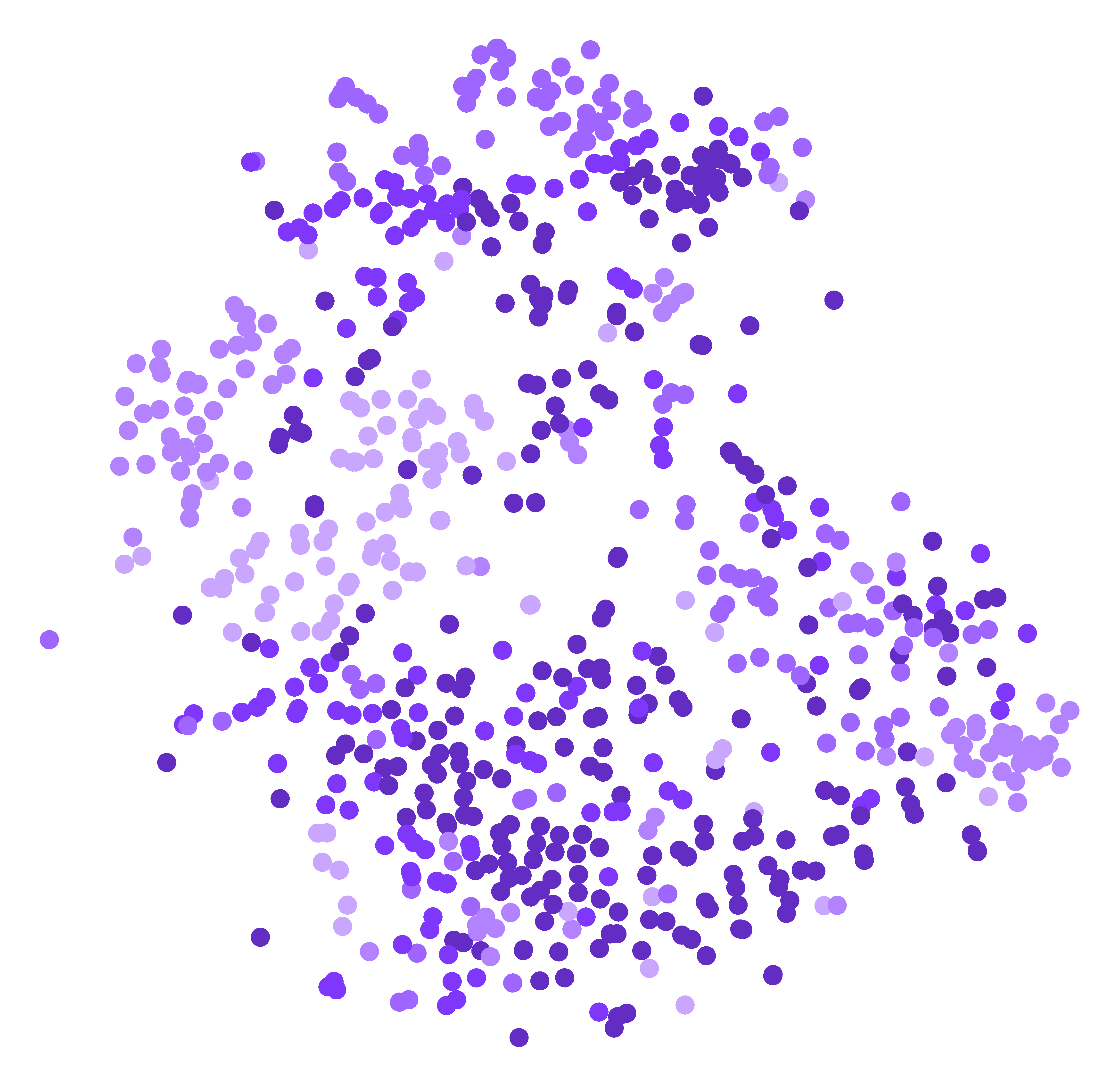}}%
  \hspace{0.4cm}%
}%
\\[0.5cm]
\mbox{%
  \hspace{-0.2cm}%
  \includegraphics{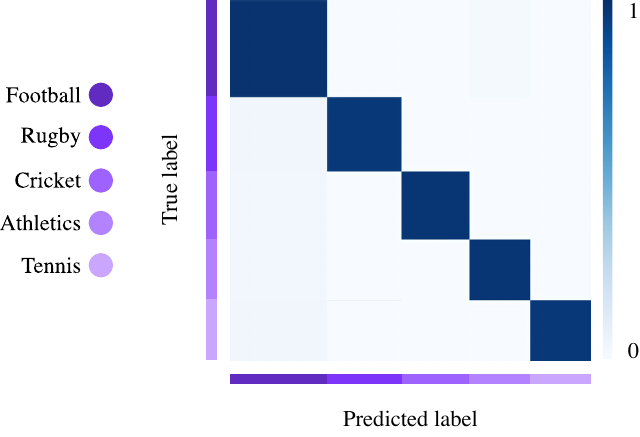}\hspace{0.3cm}%
  \raisebox{0cm}{\includegraphics[width=5cm,height=5cm]{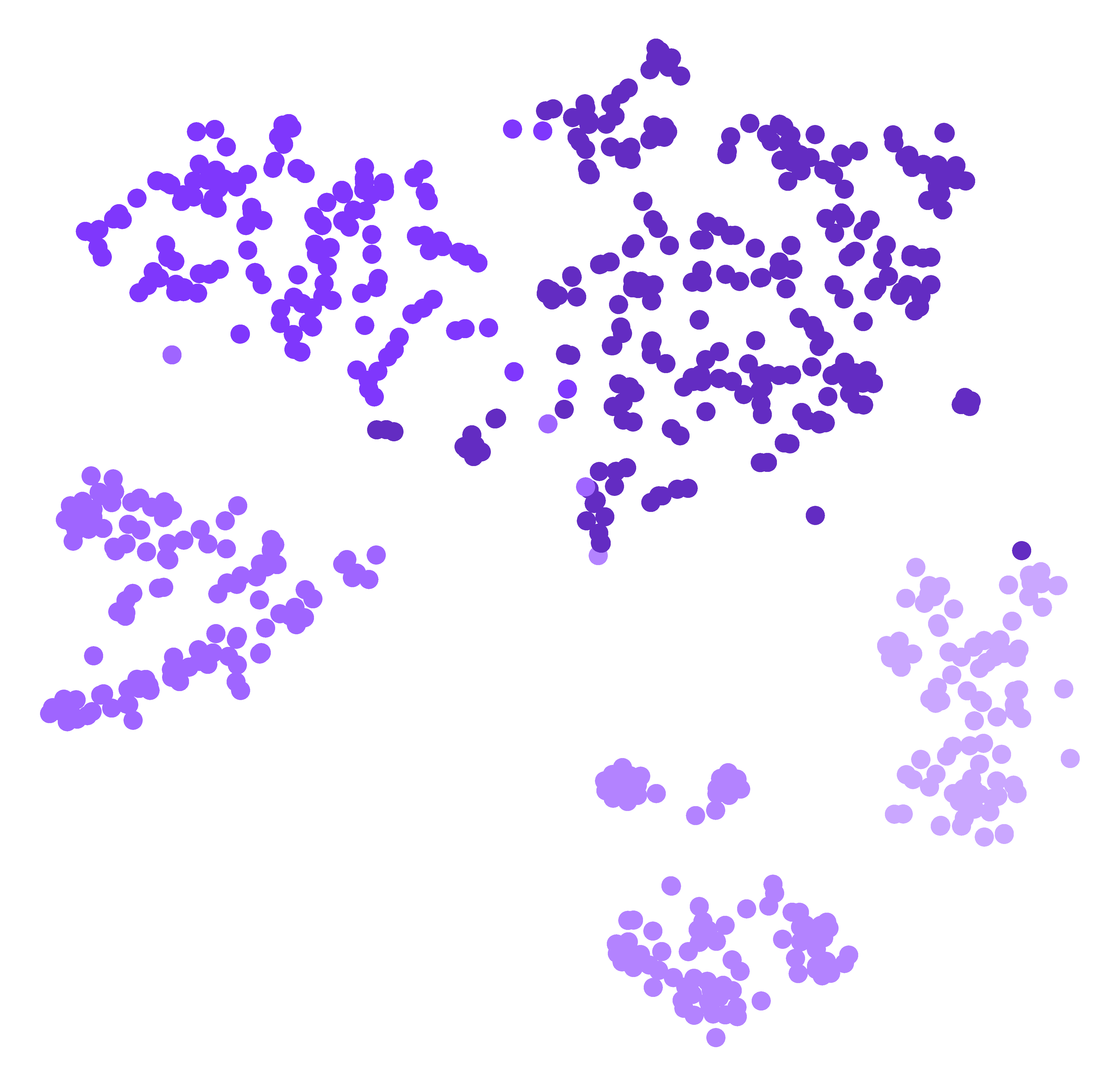}}%
  \hspace{0.4cm}%
}%
\caption{%
  \kNN{} ($k=1$) confusion matrices and t-\abbr{SNE} document visualizations
  for the soft \abbr{VSM} with non-regularized (21.82\% test error, top) and
  regularized (2.27\% test error, bottom) word embeddings on the
  \dataset{BBCSPORT} dataset.
  See \url{https://mir.fi.muni.cz/bbcsport-nonregularized}
  and \url{https://mir.fi.muni.cz/bbcsport-regularized} for interactive three-dimensional
  t-\abbr{SNE} document visualisations.
}
\label{fig:classification-bbcsport}
\end{figure*}

\begin{figure*}
\centering
\mbox{%
  \hspace{-0.2cm}%
  \includegraphics{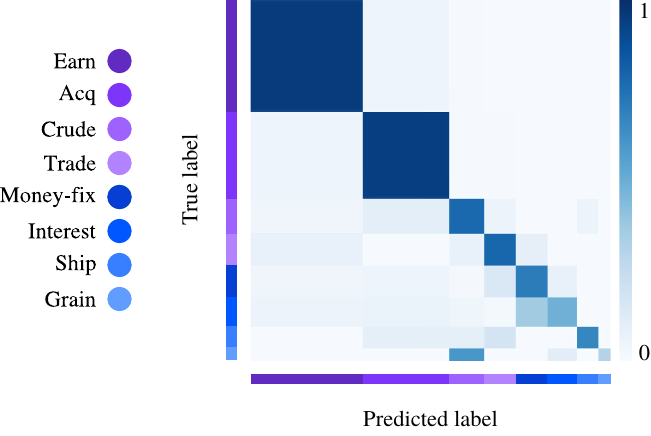}\hspace{0.3cm}%
  \raisebox{0cm}{\includegraphics[width=5cm,height=5cm]{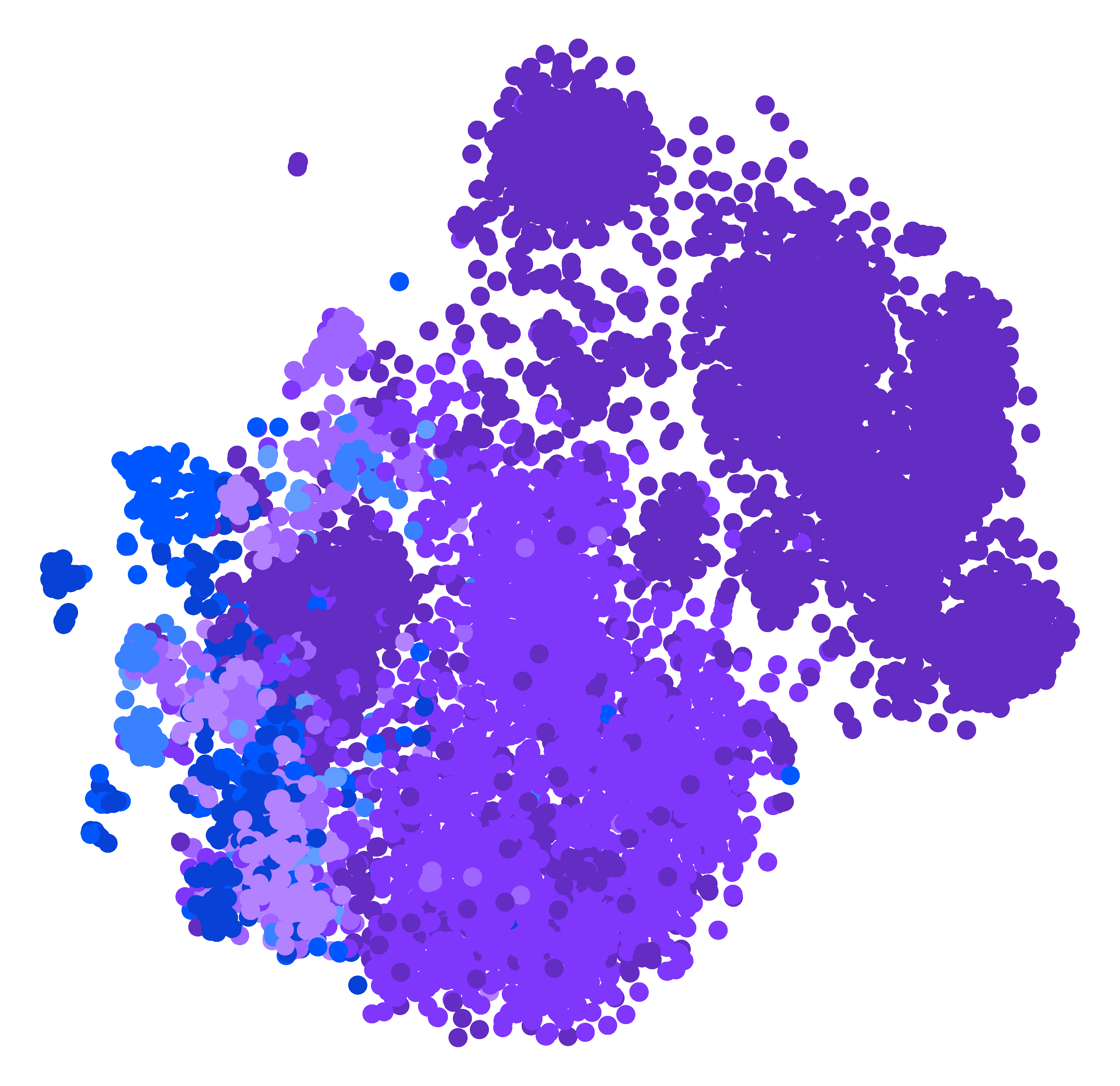}}%
}%
\\[0.5cm]
\mbox{%
  \hspace{-0.2cm}%
  \includegraphics{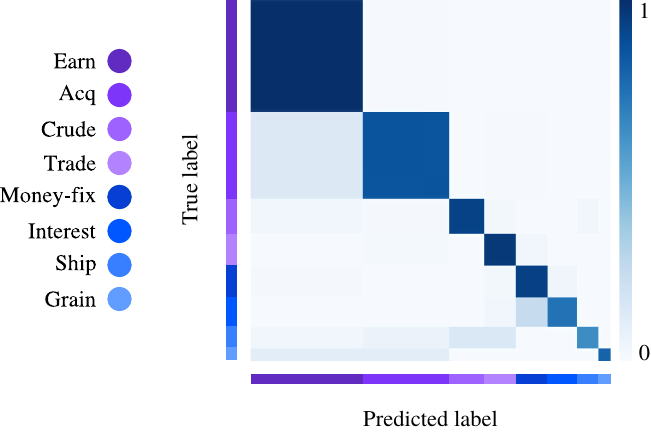}\hspace{0.3cm}%
  \raisebox{0cm}{\includegraphics[width=5cm,height=5cm]{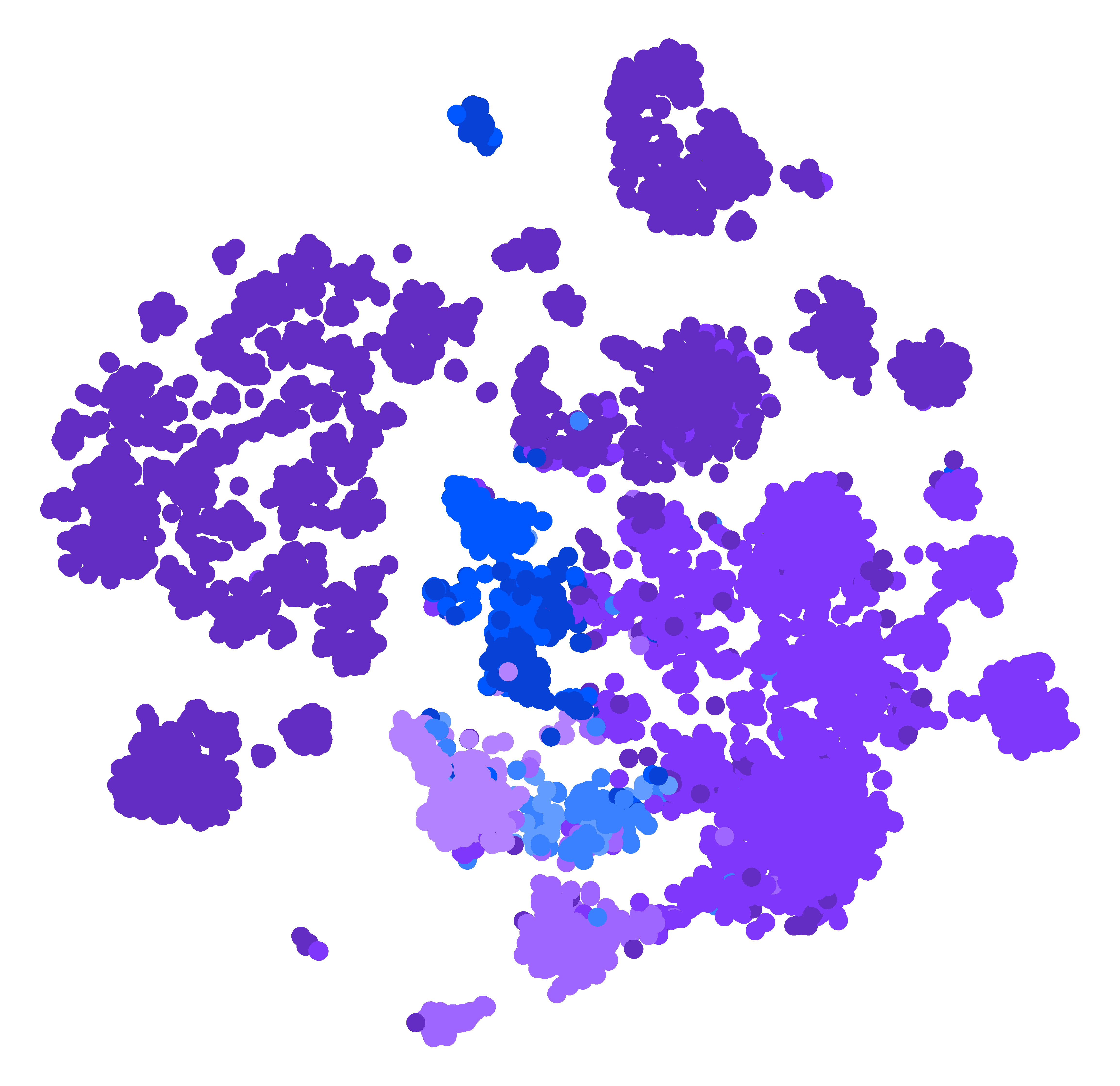}}%
}%
\caption{%
  \textls[-20]{%
    \kNN{} ($k=9$ and $19$) confusion matrices and t-\abbr{SNE} document
    visualizations for the soft \abbr{VSM} with non-regularized%
  }
  (10.19\% test error, top) and regularized (7.22\% test error, bottom) word
  embeddings on the \dataset{REUTERS} dataset.
  See \url{https://mir.fi.muni.cz/reuters-nonregularized}
  and \url{https://mir.fi.muni.cz/reuters-regularized} for interactive three-dimensional
  t-\abbr{SNE} document visualisations.
}
\label{fig:classification-reuters}
\end{figure*}

\begin{figure*}
\centering
\mbox{%
  \hspace{-0.2cm}%
  \includegraphics{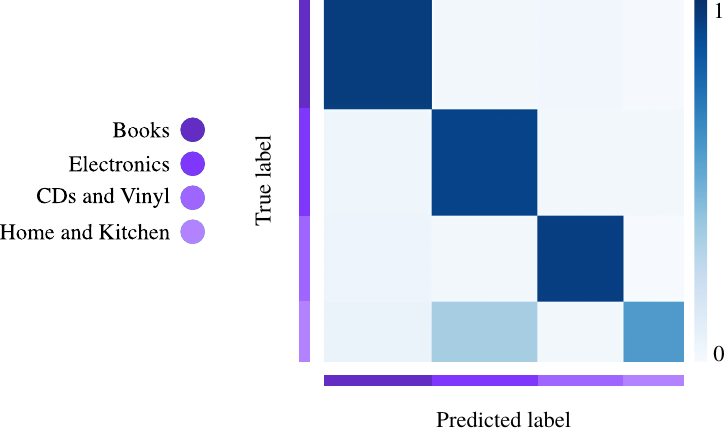}\hspace{0.3cm}%
  \raisebox{0cm}{\includegraphics[width=5cm,height=5cm]{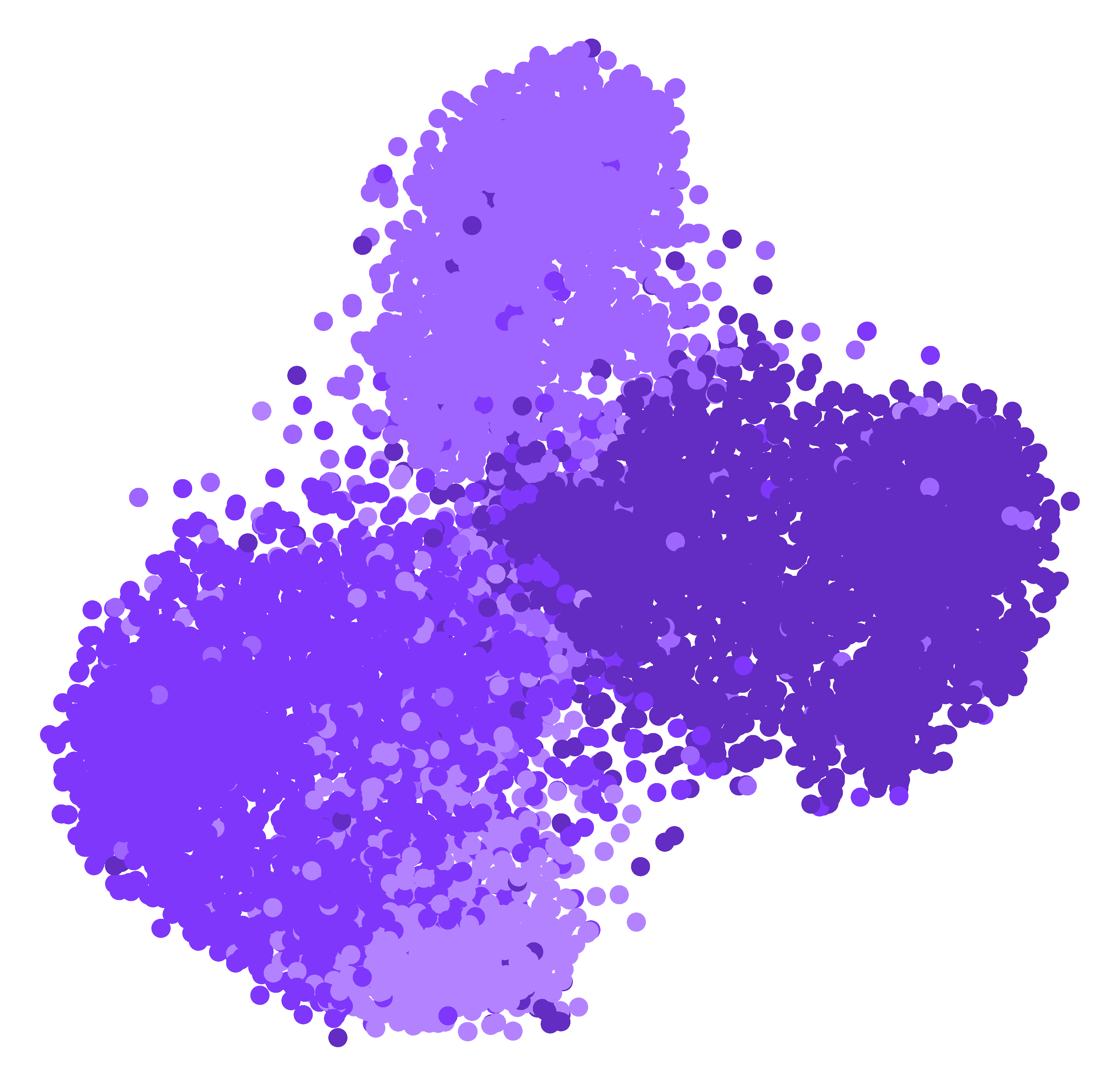}}%
  \hspace{0.75cm}%
}%
\\[0.5cm]
\mbox{%
  \hspace{-0.2cm}%
  \includegraphics{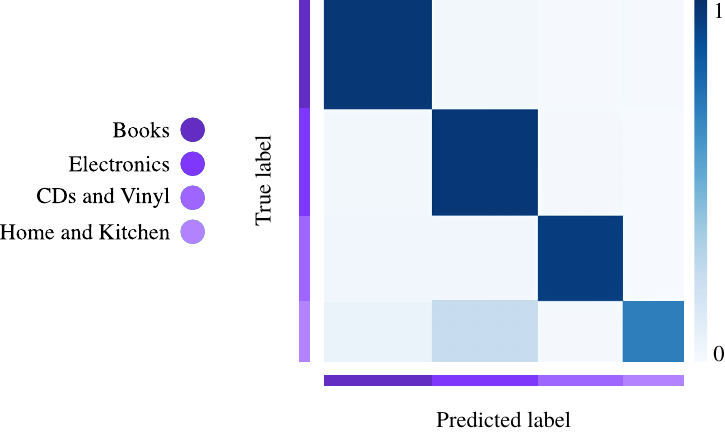}\hspace{0.3cm}%
  \raisebox{0cm}{\includegraphics[width=5cm,height=5cm]{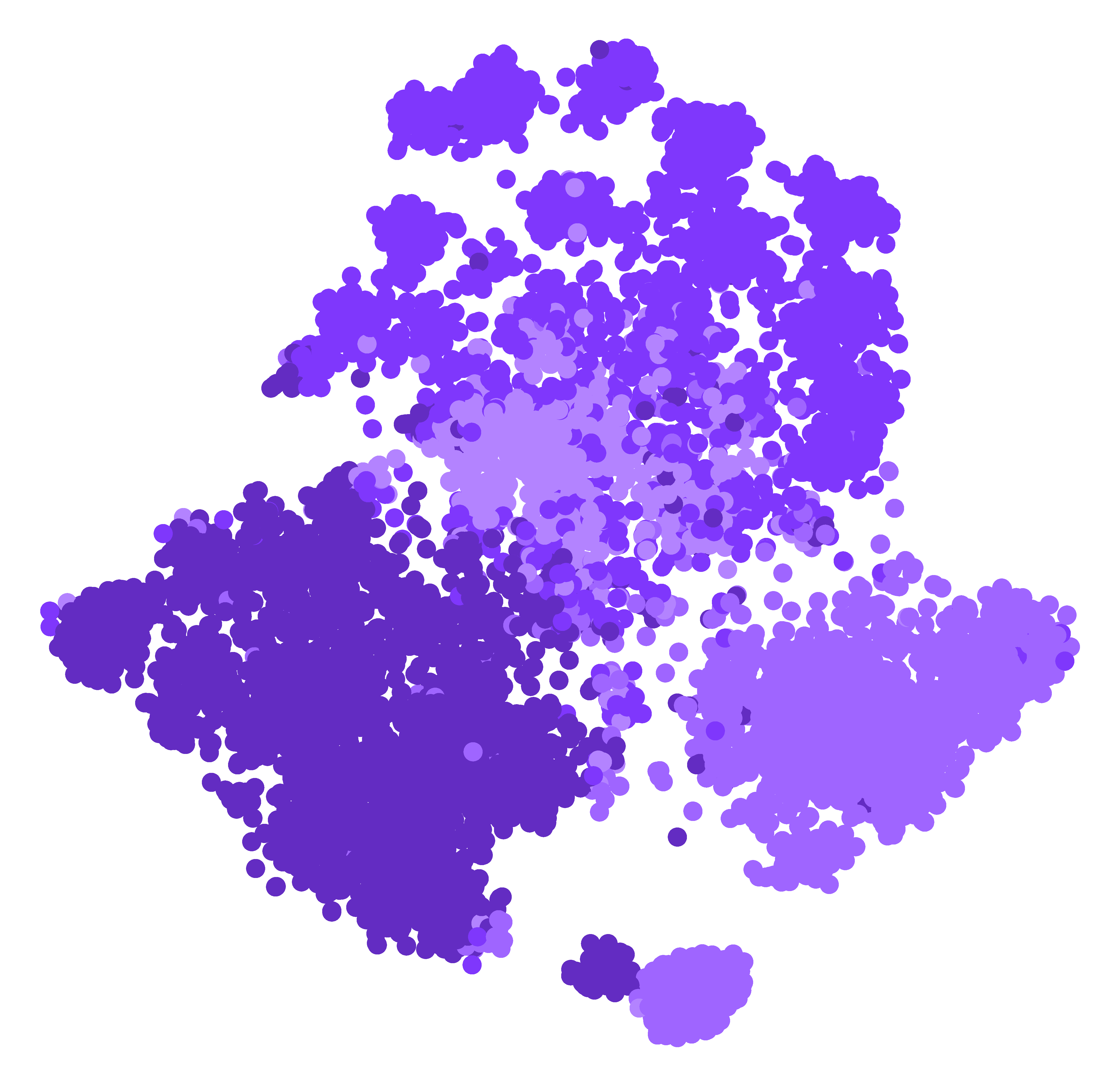}}%
  \hspace{0.75cm}%
}%
\caption{
  \textls[-20]{%
    \kNN{} ($k=15$ and $11$) confusion matrices and t-\abbr{SNE} document
    visualizations for the soft \abbr{VSM} with non-regularized%
  }
  (10.04\% test error, top) and regularized (6.33\% test error, bottom) word
  embeddings on the \dataset{AMAZON} dataset.
  See \url{https://mir.fi.muni.cz/amazon-nonregularized}
  and \url{https://mir.fi.muni.cz/amazon-regularized} for interactive three-dimensional
  t-\abbr{SNE} document visualisations.
}
\label{fig:classification-amazon}
\end{figure*}

\begin{figure*}
\vspace{0.5cm}
\includegraphics{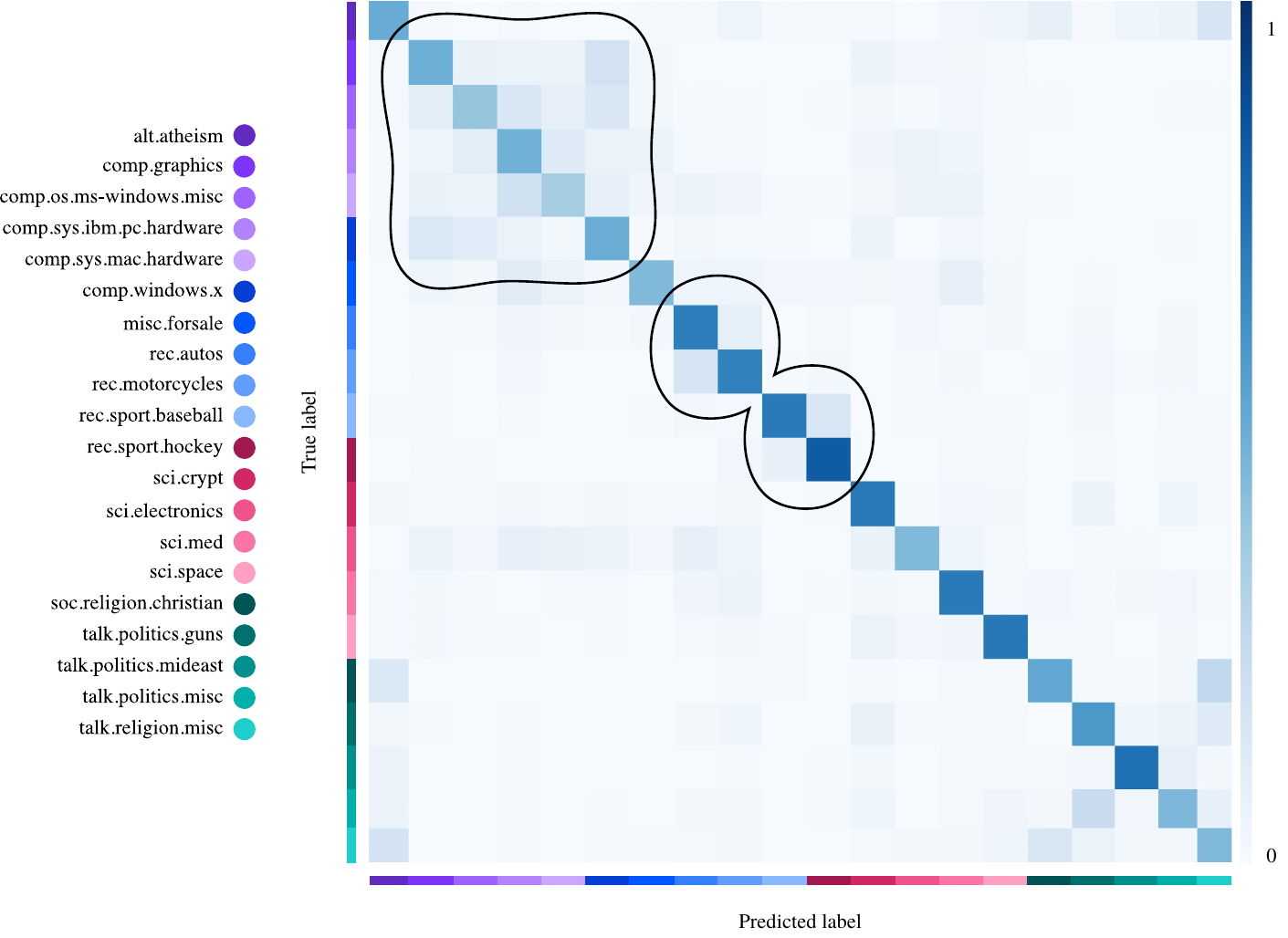}%
\\[0.42cm]
\hspace*{3.3cm}%
\includegraphics[width=11cm,height=11cm]{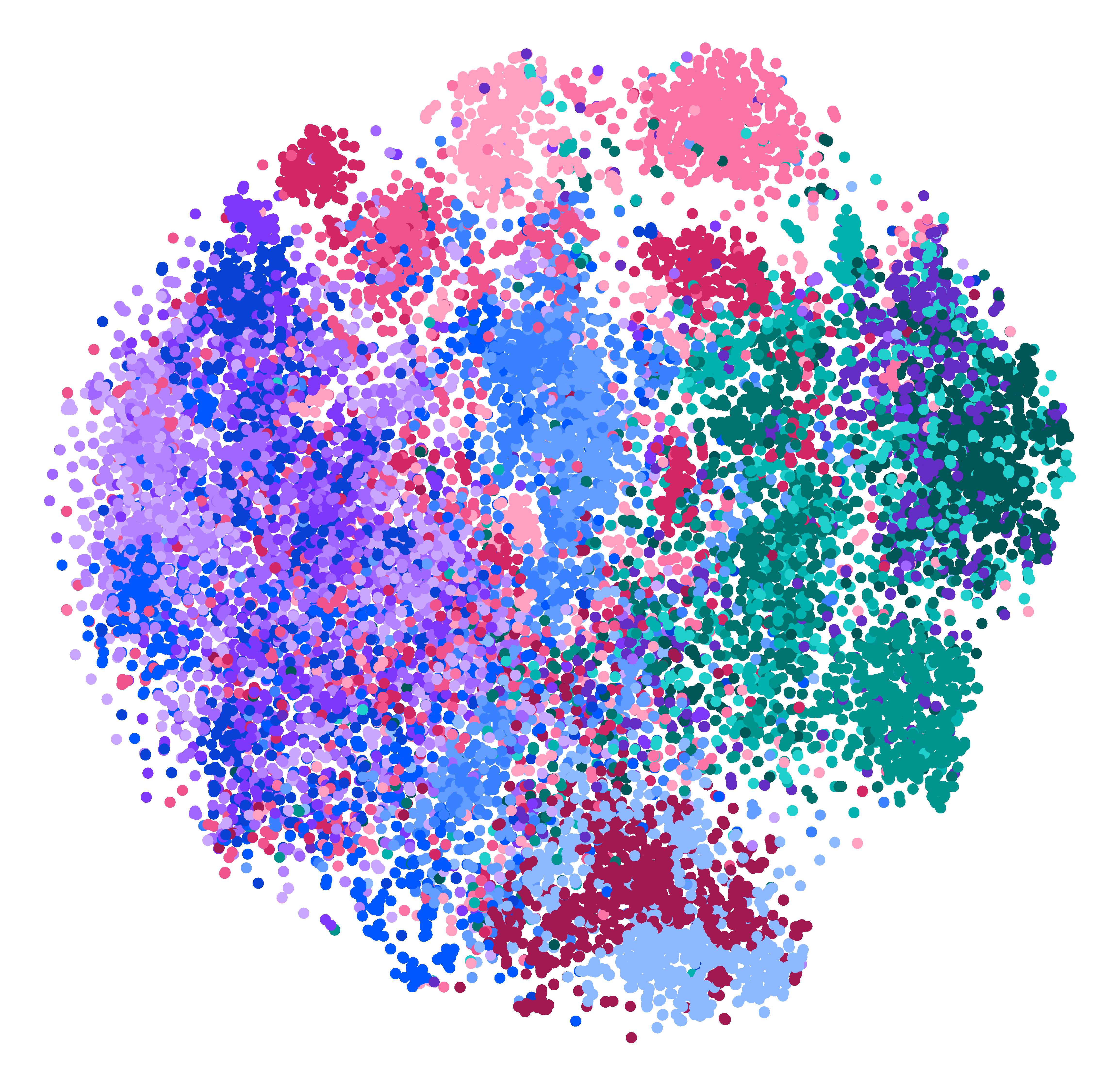}%
\kern-11cm%
\includegraphics[width=11cm,height=11cm]{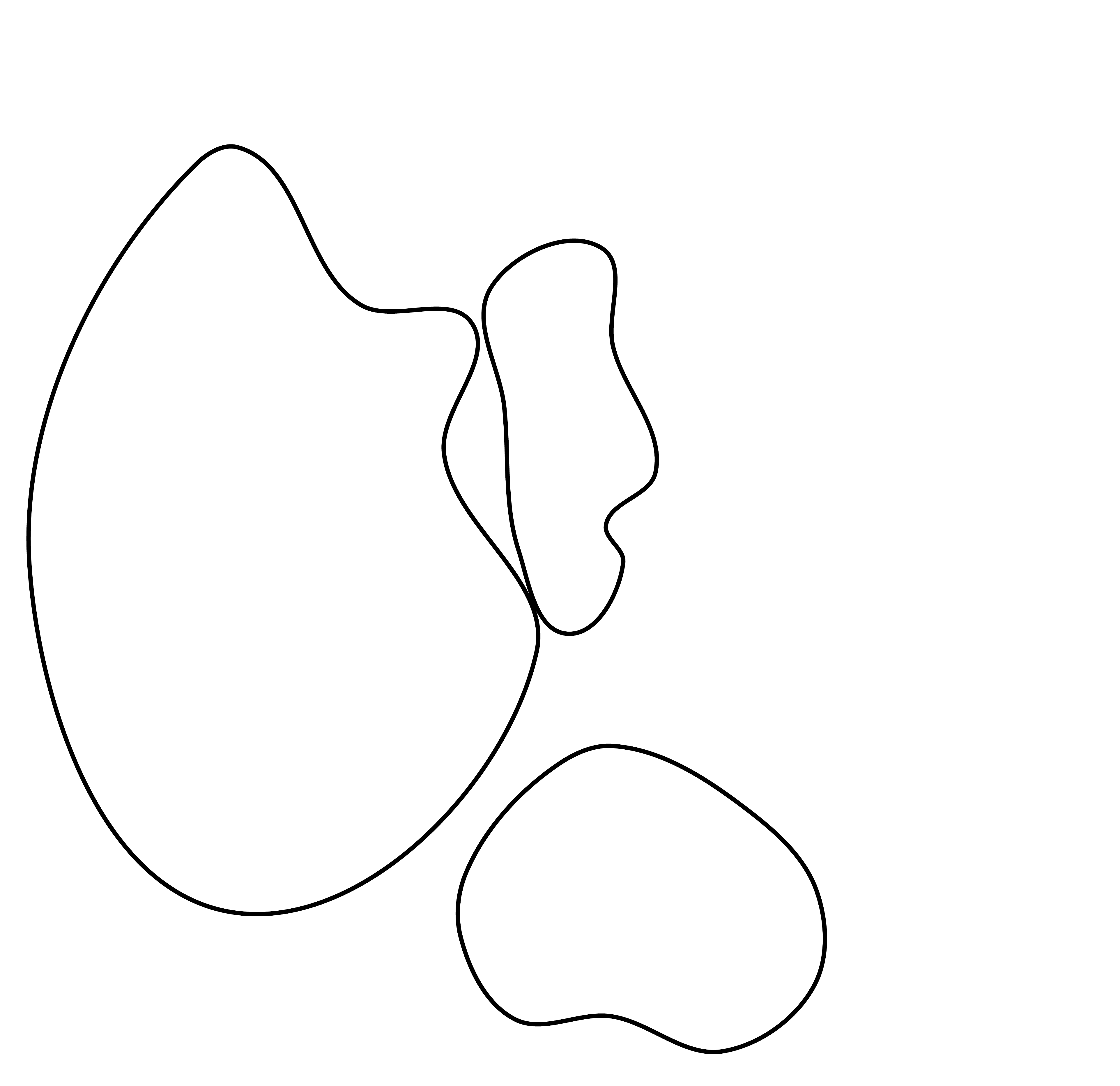}%
\caption{%
  \kNN{} ($k=1$) confusion matrix (top) and t-\abbr{SNE} document visualization
  (bottom) for the soft \abbr{VSM} with non-regularized (42.53\% test error)
  word embeddings on the \dataset{20NEWS} dataset.
  Notice how the Usenet hierarchies \texttt{comp.*}, \texttt{rec.*},
  and \texttt{rec.sport.*} form visible clusters.
  See \url{https://mir.fi.muni.cz/20news-nonregularized} for an interactive three-dimensional
  t-\abbr{SNE} document visualisation.
}
\label{fig:classification-20news-a}
\end{figure*}

\begin{figure*}
\vspace{0.5cm}
\includegraphics{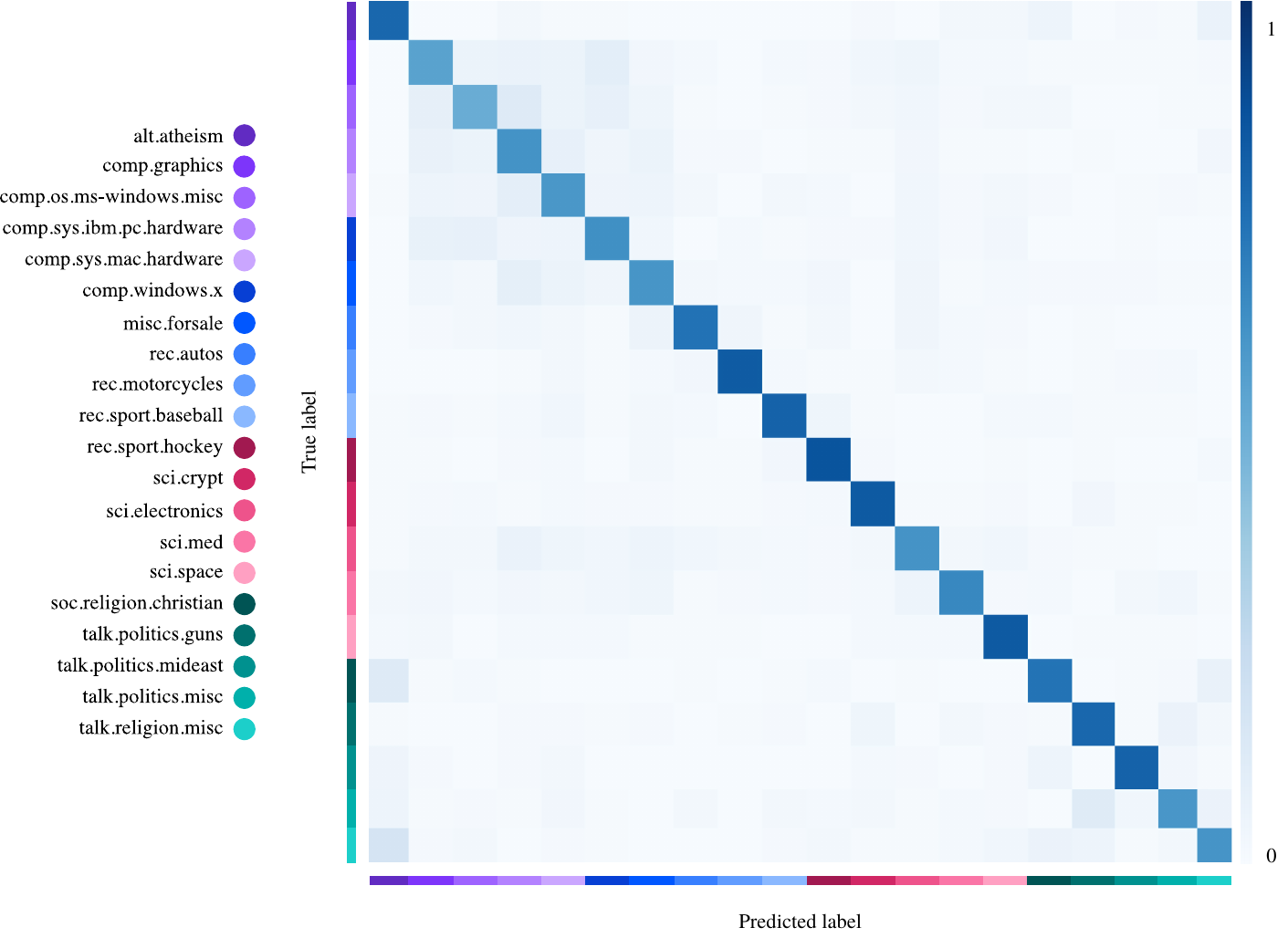}%
\\[0.42cm]
\hspace*{3.3cm}\includegraphics[width=11cm,height=11cm]{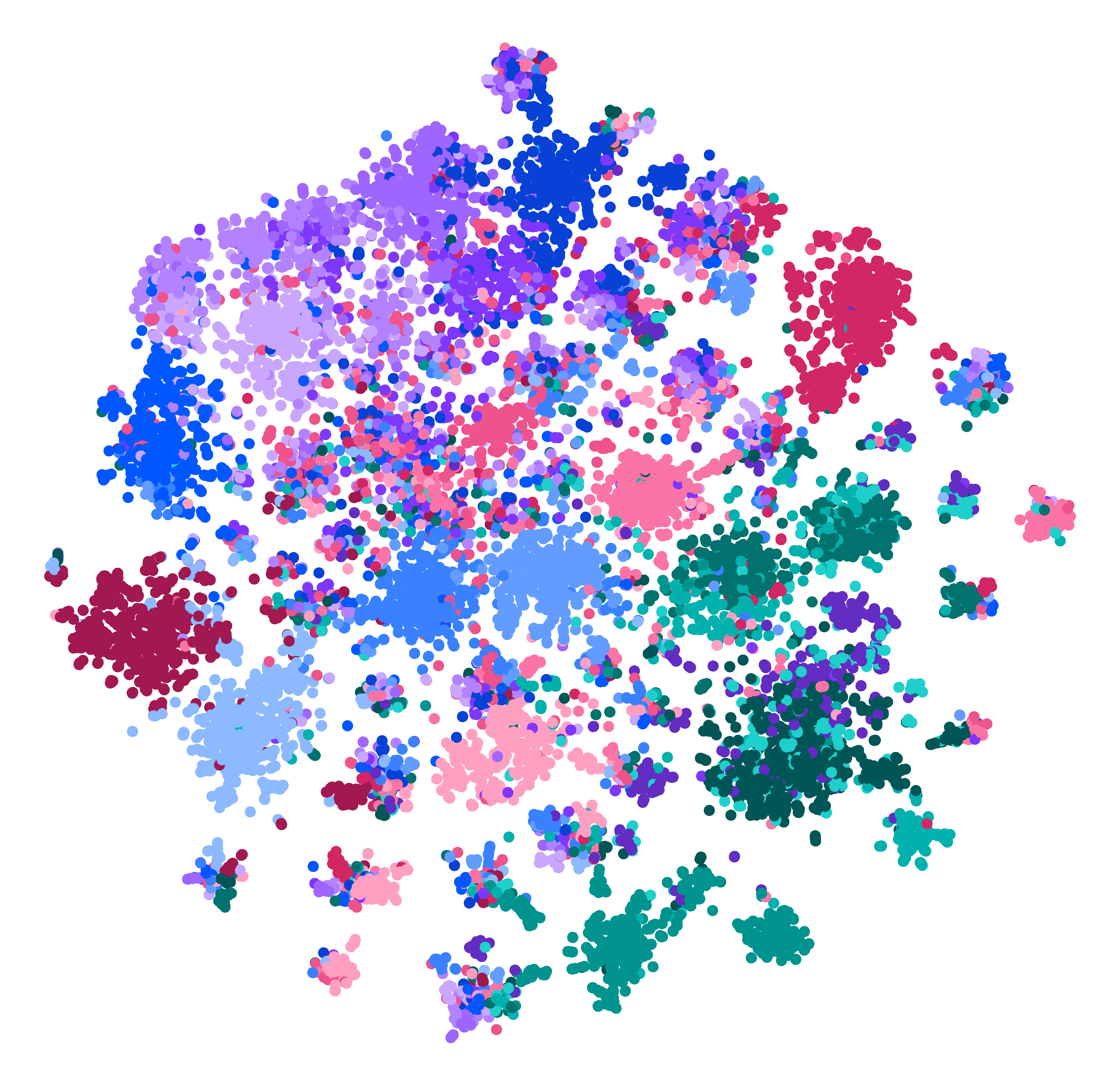}
\caption{%
  \kNN{} ($k=1$) confusion matrix (top) and t-\abbr{SNE} document visualization
  (bottom) for the soft \abbr{VSM} with regularized (29.97\% test error) word
  embeddings on the \dataset{20NEWS} dataset.
  See \url{https://mir.fi.muni.cz/20news-regularized} for an interactive three-dimensional
  t-\abbr{SNE} document visualisation.
}
\label{fig:classification-20news-b}
\end{figure*}

\subsection{Training and Regularization of Word Embeddings}
Using Word2Vec, we train the \abbr{CBOW} on the first 100 MiB of the English
Wikipedia~\citep{mahoney11about} using the same
parameters as \citet[Section~4.3]{lam} and 10 training epochs. We use quantized
word embeddings in 1,000 dimensions and non-quantized word embeddings
in 200 dimensions to achieve comparable performance on the word analogy
task.~\citep{lam}

\subsection{Nearest Neighbor Classification}
We use the \abbr{VSM} with uniform word frequency weighting, also known as the
bag of words (\abbr{BoW}), as our baseline. For the \abbr{SCM}, we use the
double-logarithm inverse collection frequency word weighting (the \abbr{SMART}
\texttt{dtb} weighting scheme) as suggested by \citet[Table~1]{singhal2001modern}.
For the \abbr{WMD}, we use \abbr{BoW} document document vectors like \citet{kusner2015word}.

We tune the parameters $o\in\{1,2,3,4\}$ and $t\in\{0, \pm\nicefrac{1}{2}, 1\}$
of the \abbr{SCM}, the parameter
$s\in\{0.0, 0.1, \ldots, 1.0\}$ of the \abbr{SMART} \texttt{dtb} weighting
scheme, the parameter $k\in\{1,3,\ldots,19\}$ of the \kNN, and the parameters
$C\in\{100, 200, 400, 800\}$, and Idf, Sym, Dom${}\in\{\true,\false\}$ of the
orthogonalization. For each dataset, we hold out 20\% of the train set for
validation, and we use grid search to find the optimal parameter values.
To classify each sample in the test set, we follow the procedure presented in
Algorithm~\ref{alg:knn}.

\subsection{Significance Testing}
We use the method of \citet{agresti1998approximate} to construct 95\%
confidence intervals for the \kNN{} test error. For every dataset, we use
Student's $t$-test at 95\% confidence level with
$q$-values~\citep{benjamini1995controlling} for all combinations of document
similarities and word embedding regularization techniques to find significant
differences in \kNN{} test error.

\section{Discussion of Results}
\label{sec:results}
Our results are shown in figures~\ref{fig:classification-ohsumed}--\ref{fig:similarities}
and in Table~\ref{table:optimal-parameters}. In the following subsections, we will
discuss the individual results and how they are related.

\subsection{Task Performance on Individual Datasets}
Figure~\ref{fig:test-error} shows 95\% interval estimates for the \kNN{}
test error. All differences are significant, except for the second and
fourth results from the left on the \dataset{BBCSPORT} and \dataset{TWITTER}
datasets, the sixth and seventh results from the left on the
\dataset{BBCSPORT} and \dataset{OHSUMED} datasets, and the fourth and sixth
results from the left on the \dataset{TWITTER} dataset.
\looseness=-1

\begin{figure*}
\vspace{-0.2cm}%
\begin{minipage}[t]{\textwidth}
  \begin{figure}[H]
    \hspace*{-0.4cm}%
    \includegraphics{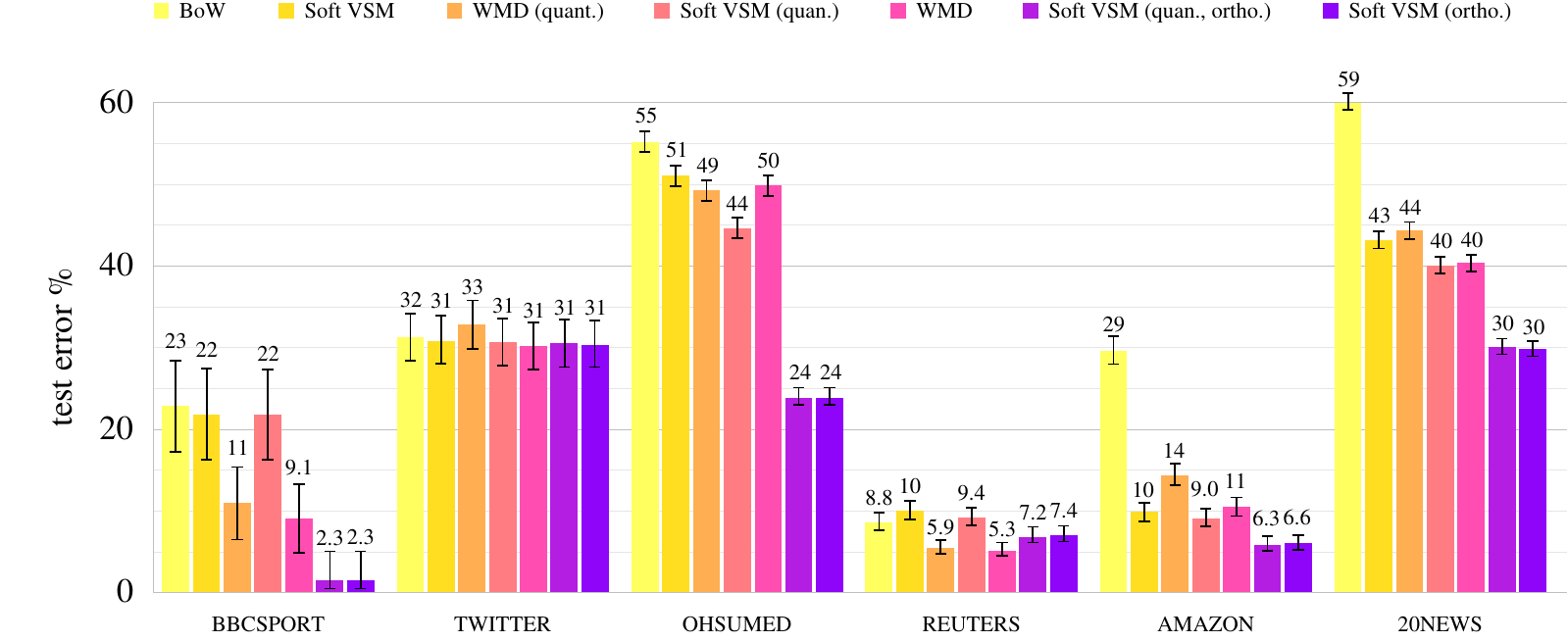}
    \caption{%
      95\% interval estimates for the \kNN{} test error on six text classification datasets%
    }%
    \label{fig:test-error}
  \end{figure}
\end{minipage}\\[-0.2cm]
\begin{minipage}[b]{0.48\textwidth}
  \begin{figure}[H]
    \centering
    \hspace*{0.06cm}%
    \includegraphics[scale=0.98]{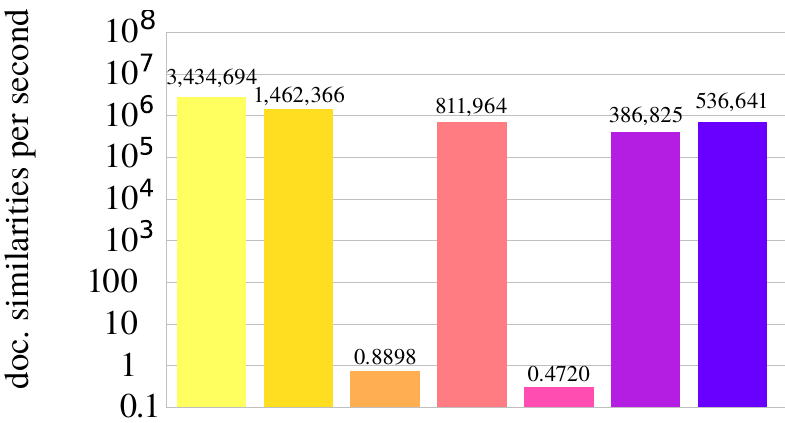}
    \caption{%
      \textls[-35]{%
        Average document processing speed on one Intel Xeon X7560 core%
      }
    }
    \label{fig:speed}
  \end{figure}
\end{minipage}\hfill
\begin{minipage}[b]{0.452\textwidth}
  \begin{figure}[H]
    \centering
    \includegraphics[scale=0.98]{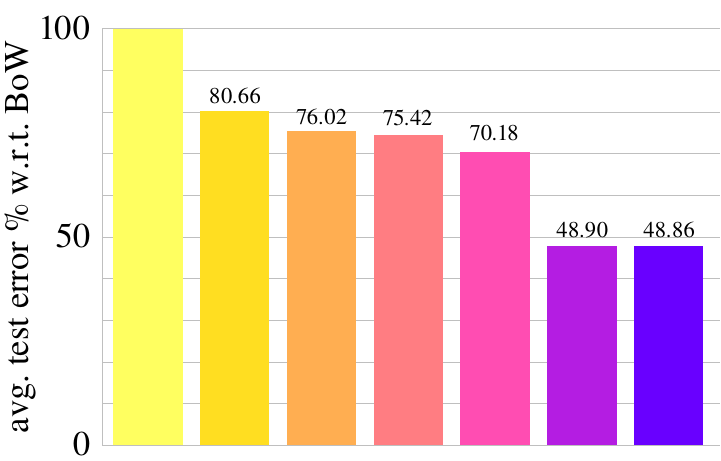}
    \caption{%
      Average \kNN{} test error compared to the \abbr{BOW}%
    }
    \label{fig:avg-test-error-ratio}
  \end{figure}
\end{minipage}
\end{figure*}

Although most differences in the \kNN{} test error are statistically significant on the
\dataset{TWITTER} dataset, they are relatively minor compared to other
datasets. This is because of two reasons: (1)~\dataset{TWITTER} is a sentiment
analysis dataset, and (2)~words with opposite sentiment often appear in similar
sentences. Since word embeddings are similar for words that appear in similar
sentences, embeddings for words with opposite sentiment are often similar. For
example, the embeddings for the words good and bad are mutual nearest neighbors
with cosine similarity 0.58 for non-quantized and 0.4 for quantized word
embeddings. As a result, word embeddings don't help us separate positive and
negative documents. Using a better measure of sentiment in the word similarity
matrix~$\S$ and in the flow cost $c_{ij}$ would improve the task
performance of the soft \abbr{VSM} and the~\abbr{WMD}.

\textls[-15]{%
Figures~\ref{fig:classification-ohsumed}--\ref{fig:classification-20news-b}
show confusion matrices and t-\abbr{SNE} document visualizations~\citep{maaten2008visualizing}
for the soft \abbr{VSM} with non-regularized word embeddings and for the
soft \abbr{VSM} with orthogonalized and quantized word embeddings.%
\looseness=-1
}

In Figure~\ref{fig:classification-ohsumed}, we see that with non-regularized word
embeddings, we predict most documents as class~C04, followed by classes~C10 and
C06. When a document representation is uniformly random, then the most common
classes in a dataset are most likely to be predicted by the \kNN{} classifier.
In the \dataset{OHSUMED} dataset, 2,835 documents belong to the most common
class~C04, 1,711 documents belong to the second most common class~C10, and
1,246 documents belong to the third most common class~C06. In contrast to the
almost random representations of the soft \abbr{VSM} with non-regularized word
embeddings, orthogonalized word embeddings make the true class label much easier
to predict. We hypothesize that this is because \dataset{OHSUMED} is a medical
dataset. Medical terms are highly specific, so when we search for documents
containing similar words, we need to restrict ourselves only to highly
similar words, which is one of the effects of using orthogonalized word embeddings.

In Figure~\ref{fig:classification-reuters}, we see that with non-regularized word
embeddings, most documents from the class Grain are misclassified as the more
common class Crude. With regularized word embeddings, the classes are separated.
We hypothesize that this is because both grain and crude oil are commodities,
which makes the documents from both classes contain many similar words. The classes
will become indistinguishable unless we restrict ourselves only to a subset of the
similar words, which is one of the effects of using orthogonalized word embeddings.

In Figure~\ref{fig:classification-20news-a}, we see that with non-regularized word
embeddings, messages are often misclassified as a different newsgroup in the same
Usenet hierarchy. We can see that the Usenet hierarchies \texttt{comp.*}, \texttt{rec.*},
and \texttt{rec.sport.*} form visible clusters in both the t-\abbr{SNE} document visualization
and in the confusion matrix. In Figure~\ref{fig:classification-20news-b}, we see that
with regularized word embeddings, the clusters are separated. We hypothesize that this
is because newsgroups in a Usenet hierarchy share a common topic and similar terminology.
The terminology of the newsgroups will become difficult to separate unless only highly
specific words are allowed to be considered similar, which is one of the effects of using
orthogonalized word embeddings.

\subsection{Average Task Performance}

Figure~\ref{fig:avg-test-error-ratio} shows the average \kNN{} test error ratio
between the document similarities and the \abbr{BOW} baseline. This ratio is
the lowest for the soft \abbr{VSM} with orthogonalized word embeddings, which
achieves $48.86\%$ of the average \abbr{BOW} \kNN{} test error.  The average
test error ratio between the soft \abbr{VSM} with regularized word embeddings
and the soft \abbr{VSM} with non-regularized word embeddings is $60.57\%$,
which is a $39.43\%$ reduction in the average \kNN{} test error.
\looseness=-1

Unlike the soft \abbr{VSM}, the \abbr{WMD} does not benefit from word embedding
quantization. This is because of two reasons: (1)~the soft \abbr{VSM} takes into
account the similarity between all words in two documents, whereas the
\abbr{WMD} only considers the most similar word pairs, and (2)~non-quantized
word embeddings are biased towards positive similarity, see
Figure~\ref{fig:similarities}. With non-quantized word embeddings, embeddings
of unrelated words have positive cosine similarity, which makes dissimilar
documents less separable. With quantized embeddings, unrelated words have
negative cosine similarity, which improves separability and reduces \kNN{} test error.
The \abbr{WMD} is unaffected by the bias in non-quantized word embeddings, and
the reduced precision of quantized word embeddings increases \kNN{} test error.
\looseness=-1

\subsection{Document Processing Speed}

Figure~\ref{fig:speed} shows the average document processing speed using a
single Intel Xeon X7560 2.26\,GHz core.  Although the orthogonalization reduces
the worst-case time complexity of the soft \abbr{VSM} from quadratic to linear,
it also makes the word similarity matrix sparse, and performing sparse instead
of dense matrix operations causes a $2.73\times$ slowdown compared to the soft
\abbr{VSM} with non-orthogonalized word embeddings. Quantization causes a
$1.8\times$ slowdown, which is due to the $5\times$ increase in the word
embedding dimensionality, since we use 1000-dimensional quantized word
embeddings and only 200-dimensional non-quantized word embeddings.

The super-cubic average time complexity of the \abbr{WMD} results in an average
$819{,}496\times$ slowdown compared to the soft \abbr{VSM} with orthogonalized
and quantized word embeddings, and a $1{,}505{,}386\times$ slowdown on the
\dataset{20NEWS} dataset, which has a large average number of unique words in a
document. Although \citet{kusner2015word} report up to $110\times$ speed-up
using an approximate variant of the \abbr{WMD} (the \abbr{WCD}), this still
results in an average $7{,}450\times$ slowdown, and a $13{,}685\times$ slowdown
on the \dataset{20NEWS} dataset.

\begin{figure}
\vspace{-0.4cm}
\begin{minipage}[c]{0.46\textwidth}
  \vspace{-0.5cm}
  \begin{figure}[H]
    \centering
    \hspace*{-7mm}
    \includegraphics[scale=0.95]{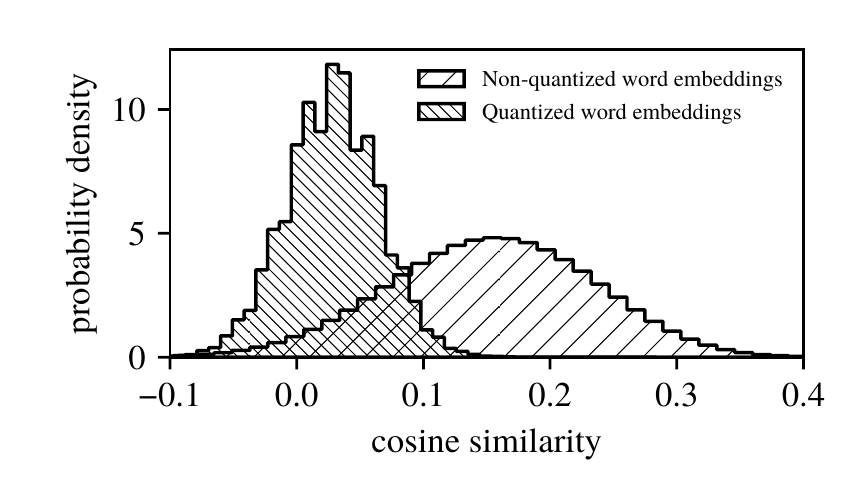}
    \vspace{-0.8cm}
    \caption{%
      Histograms of word embedding cosine similarity%
    }%
    \label{fig:similarities}
  \end{figure}
\end{minipage}\hfill
\begin{minipage}[c]{0.50\textwidth}
  \begin{table}[H]
    \caption{%
      Optimal parameter values for the soft \abbr{VSM} with orthogonalized
      word embeddings. Most common parameter values are bold.
    }%
    \label{table:optimal-parameters}
    \vspace{-0.2cm}
    \begin{center}
    \begin{tabular}{cccccrccc}
    \toprule
      \begin{tabular}{@{}c@{}}Dataset \\ name\end{tabular} &
        \begin{tabular}{@{}c@{}}$o$\end{tabular} &
        \begin{tabular}{@{}c@{}}$s$\end{tabular} &
        \begin{tabular}{@{}c@{}}$t$\end{tabular} &
        \begin{tabular}{@{}c@{}}$C$\end{tabular} &
        \begin{tabular}{@{}c@{}}$k$\end{tabular} &
        \begin{tabular}{@{}c@{}}\kern0.5ex Idf\kern0.5ex\end{tabular} &
        \begin{tabular}{@{}c@{}}Sym\end{tabular} &
        \begin{tabular}{@{}c@{}}Dom\end{tabular}
        \\ \midrule
      \dataset{BBCSPORT} &         2 & \textbf 0 & $\mathbf{-1}$ & \textbf{100} & \textbf 1 & \bftrue & \bftrue & \bftrue \\
      \dataset{TWITTER}  &         2 & \textbf 0 & $\mathbf{-1}$ &         400  &        13 & \bftrue & \bftrue & \false  \\
      \dataset{OHSUMED}  & \textbf 4 & \textbf 0 & $\mathbf{-1}$ &         200  &        11 & \bftrue & \false  & \bftrue \\
      \dataset{REUTERS}  & \textbf 4 & \textbf 0 & $\mathbf{-1}$ & \textbf{100} &        19 & \bftrue & \bftrue & \bftrue \\
      \dataset{AMAZON}   & \textbf 4 & \textbf 0 & $\mathbf{-1}$ & \textbf{100} &        17 & \bftrue & \bftrue & \bftrue \\
      \dataset{20NEWS}   &         3 & \textbf 0 & $\mathbf{-1}$ & \textbf{100} & \textbf 1 & \bftrue & \false  & \false  \\
      \bottomrule
    \end{tabular}
    \end{center}
  \end{table}
\end{minipage}
\end{figure}

\subsection{Parameter Optimization}

Table~\ref{table:optimal-parameters} shows the optimal parameter values for the
soft \abbr{VSM} with orthogonalized word embeddings. The most common parameter
value $\text{Idf}=\true$ shows that it is important to store the nearest
neighbors of rare words in the word similarity matrix $\S'$. The most common
parameter values $C=100$, $o=4$, $t=-1$, $\text{Sym}=\true$, and $\text{Dom}=\true$
show that strong orthogonalization, which makes most values in $\S'$ zero or
close to zero, gives the best results.

Due to the low document processing speed of the \abbr{WMD} and the number of orthogonalization
parameters, using parameter optimization on the \abbr{WMD} with regularized
embeddings is computationally intractable. Therefore, we do not report results
for the \abbr{WMD} with regularized embeddings in
figures~\ref{fig:test-error}--\ref{fig:avg-test-error-ratio}.

\section{Conclusion}
\label{sec:conclusion}
Word embeddings achieve state-of-the-art results on several
\abbr{NLP} tasks, predominantly at the sentence level, but overfitting is a
major issue, especially when applied at the document level with a large number
of words. We have shown that regularization of word embeddings significantly
improves their performance not only on the word analogy, word similarity, and
semantic text similarity tasks, but also on the text classification task. 

We further show that the most effective word embedding regularization technique
is orthogonalization and we prove a connection between orthogonalization, Cholesky
factorization and orthogonalized word embeddings. With word embedding orthogonalization,
the task performance of the soft \abbr{VSM} exceeds the \abbr{WMD}, an earlier
known state-of-the-art document distance measure, while being several orders of
magnitude faster. This is an important step in bringing application of word
embeddings from supercomputers to mobile and embedded systems.

\begin{acknowledgements}
The authors are grateful to their colleagues who helped make the language precise
and engaging, the figures crisp and accessible, and the bibliographical references
valid and consistent.
\end{acknowledgements}

\begin{funding}
\textls[-20]{%
  First author's work was graciously funded by the South Moravian Centre for
  International Mobility as a part of the Brno Ph.D.\ talent project.%
}
\end{funding}

\begin{conflict}
The authors declare that they have no conflict of interest.
\end{conflict}

\bibliography{ayetiran,witiko,sojka}
\bibliographystyle{spbasic}
\end{document}